\newcommand{\argmax}{\mathop{\arg\max}}
\newcommand{\argmin}{\mathop{\arg\min}}
\newcommand{\bs}{\boldsymbol}
\definecolor{Gray}{gray}{0.9}
\theoremstyle{definition}
\newtheorem*{theorem*}{Theorem}
\newtheorem*{note*}{Note}
\newtheorem{ex}{Example}
\newtheorem{prop}{Proposition}
\newcommand{\setX}{\mathcal{X}}
\newcommand{\setZ}{\mathcal{Z}}
\newcommand{\setA}{\mathcal{A}}
\newcommand{\clustering}{\mathfrak{C}}
\newcommand{\normaldist}{\mathcal{N}}
\title{A Greedy and Optimistic Approach to Clustering \\ with a Specified Uncertainty of Covariates}
\author[1,2]{Akifumi Okuno}
\author[3,1,4]{Kohei Hattori}
\affil[1]{The Institute of Statistical Mathematics}
\affil[2]{RIKEN Center for Advanced Intelligence Project}
\affil[3]{National Astronomical Observatory of Japan}
\affil[4]{Department of Astronomy, University of Michigan}
\date{\empty}
\begin{document}

\maketitle

\begin{abstract}
In this study, we examine a clustering problem in which the covariates of each individual element in a dataset are associated with an uncertainty specific to that element.
More specifically, we consider a clustering approach in which a pre-processing applying a non-linear transformation to the covariates is used to capture the hidden data structure. To this end, we approximate the sets representing the propagated uncertainty for the pre-processed features empirically. 
To exploit the empirical uncertainty sets, we propose a greedy and optimistic clustering (GOC) algorithm that finds better feature candidates over such sets, yielding more condensed clusters. 
As an important application, we apply the GOC algorithm to synthetic datasets of the orbital properties of stars generated through our numerical simulation mimicking the formation process of the Milky Way. 
The GOC algorithm demonstrates an improved performance in finding sibling stars originating from the same dwarf galaxy. 
These realistic datasets have also been made publicly available. 
\end{abstract}

\textit{Keywords:} 
Clustering, uncertainty set, optimism, greedy optimization

\section{Introduction}

The discovery of distinct groups of unlabeled individuals using their covariates, a process called \emph{clustering}, has been a fundamental statistical problem in the fields of psychology~\citep{borgen1987applying,henry2005cluster}, astronomy~\citep{roederer2018kinematics,Helmi2020ARA&A..58..205H, Yuan2020ApJ...891...39Y}, and biology~\citep{ben1999clustering,nugent2010overview}, among other areas. 
Owing to its versatility and numerous applications, many types of clustering algorithms have been developed, including $K$-means~\citep{macqueen1967some}, 
mean-shift~\citep{fukunaga1975estimation,cheng1995mean}, 
spectral clustering~\citep{chung1997spectral,von2007tutorial}, 
convex clustering ~\citep{pelckmans2005convex}, 
and likelihood-based approaches using stochastic block models ~\citep{holland1983stochastic} and Gaussian mixture models ~\citep{mclachlan2000finite}. 
For comprehensive surveys of various clustering algorithms, see \citet{jain1988algorithms}, \citet{everitt1993cluster}, and \citet{xu2005survey}. 

Although the aforementioned clustering methods assume that an instance is observed for each individual covariate, in practice, a covariate may have uncertainty caused by limited data observability and noisy measurements. 
A set representing such uncertainty is called an \emph{uncertainty set}~\citep{ben2002robust,bertsimas2011theory}. 
Robust optimization (RO)~\citep{ben2002robust,bertsimas2011theory} is an approach used to exploit an uncertainty set. RO minimizes the worst-case loss functions (over the uncertainty sets for the covariates of all individuals), with application to statistical problems including classification~ \citep{xu2009robustness,takeda2013unified} and clustering~ \citep{vo2016robust}. 
In contrast to the pessimistic approach of RO, several studies have reported that an optimistic attitude, that is, optimizing the best case (instead of the worst case), demonstrates an improved performance for various problems, including classification~\citep{jinbo2004support}, the multi-armed bandit problem~\citep{bubeck2012regret}, and Bayesian optimization~\citep{srinivas2010gaussian,nguyen2019optimistic,nguyen2019calculating}.

For computational tractability, both the pessimistic and optimistic approaches described above employ a convex uncertainty set for each individual covariate, typically a small ball equipped with the $p$-norm ($p = 1,2$) centered at the instance of an observed covariate ~\citep{ben2002robust,bertsimas2011theory,vo2016robust}. 
Although existing studies have mainly assumed that an uncertainty set is simply a fixed-sized ball around a covariate instance, in practical situations, such uncertainty sets can be specified by background knowledge accumulated over time. We assume here that the uncertainty sets are specified by the users, allowing several entries in some instances to have greater uncertainty than others.

In this study, we consider a clustering problem with covariates whose uncertainty sets $\setZ_1,\ldots,\setZ_n$ for each individual $i=1,2,\ldots,n$ are user-specified. 
More practically, we consider a situation in which the covariate $Z_i$ is further pre-processed by applying a nonlinear function $f$ (prior to the clustering analysis) used to capture the latent data structure through the pre-processed feature $X_i = f(Z_i)$. 
This pre-processing step $f$ is also expected to remove redundant information harmful to the clustering process. 
Because the explicit form of an uncertainty set for a pre-processed feature is difficult to obtain, we first generate empirical uncertainty sets  
that approximate the underlying feature uncertainty sets $\setX_1,\setX_2,\ldots,\setX_n$. To fully exploit them, we propose a simple greedy and optimistic clustering (GOC) algorithm, which greedily seeks feature candidates over the sets 
that yield more condensed clusters. 
The proposed GOC algorithm simply iterates the following steps: GOC algorithm (i) computes temporal cluster assignments of the current feature candidates using an \emph{arbitrary} clustering oracle (e.g., $K$-means or other clustering method listed above), and (ii) updates the feature candidates (and their temporal cluster assignments simultaneously) to reduce each cluster radius. 

\vspace{-1em}
\begin{figure}[!ht]
\centering
\begin{minipage}{0.45\textwidth}
\centering
\includegraphics[width=0.75\textwidth]{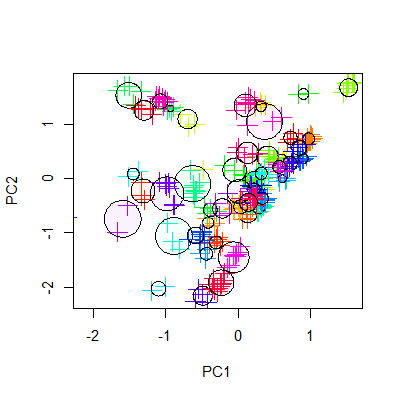}
\subcaption{Conventional $K$-means ($K=20$).}
\label{fig:conventional}
\end{minipage}
\hspace{1em}
\begin{minipage}{0.45\textwidth}
\centering
\includegraphics[width=0.75\textwidth]{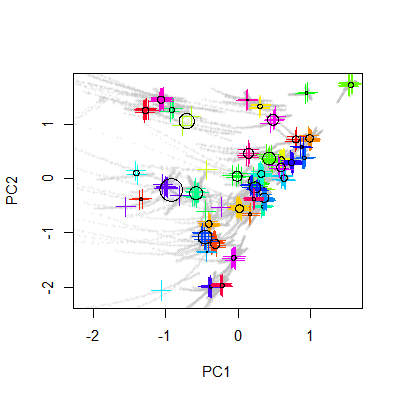}
\subcaption{GOC equipped with $K$-means ($K=20$).}
\label{fig:optimistic}
\end{minipage}
\caption{Principal components of the clusters obtained: 
(\subref{fig:conventional}) $K$-means applied to the mean vectors of each uncertainty set, and the 
(\subref{fig:optimistic}) GOC algorithm with $K$-means applied to the uncertainty sets (shown in gray). }
\label{fig:illustration_intro}
\end{figure}

As an important application of the GOC algorithm, we consider a clustering of stars in the Milky Way to find groups of stars ({\it sibling stars}) with similar orbital properties (so-called {\it orbital actions}).  
We generate synthetic datasets 
of such stars by simulating the formation of the Milky Way. 
See Example~\ref{ex:stars} in Section \ref{sec:problem_setting} for further details. 
We apply the GOC algorithm to these realistic datasets to find sibling stars, as illustrated in Figure~\ref{fig:illustration_intro}.
In comparison to conventional $K$-means applied to the mean vectors of each uncertainty set, the GOC algorithm using $K$-means yields a more condensed clusters of stars. The GOC algorithm also improves the clustering scores.
These datasets have also been made publicly available in our repository~(\url{https://github.com/oknakfm/GOC}). 

Although in this study the GOC algorithm is evaluated by leveraging realistic datasets whose true cluster assignments are known, the GOC algorithm has been applied to a real-world orbital action dataset in another study of ours~\citep{Hattori2022}, the results of which will be submitted to an astronomy journal.

\subsection{Related Works} 
A similar approach can be found in \citet{Ngai2006efficient}, which assumes that the uncertainty set $\setX_i$ consists of finite points and considers the minimum box $B_i (\supset \setX_i)$; (a bound of) the Hausdorff distance between the boundary $\partial B_i$ and cluster center is used for $K$-means instead of the Euclidean distance therein. 
However, \citet{Ngai2006efficient} does not perfectly fit our setting as it implicitly assumes the convexity of the set $\setX_i$ (also see Section~\ref{subsec:discussion2_convex} for discussion). 
Another similar approach is possible-world~(PW) model, which considers all the possible combinations of feature candidates (called ``worlds'') $\Xi_1,\Xi_2,\ldots, \Xi_{|\setA_n|}\in \setA_n:=\setX_1 \times \setX_2 \times \cdots \setX_n$, applies a clustering algorithm for each world $\Xi_1,\Xi_2,\ldots$ (in parallel), and aggregates all the clustering results. See, e.g., \citet{Volk2009clustering}, \citet{Zufle2014representative} and \cite{Liu2021RPC}. 
In contrast to our approach finding only the optimistic candidates, PW models overall require much more computational cost as the number of (even a subset of) possible worlds $\setA_n$ is numerous. 

Another direction for clustering uncertain data employs the probability density function $p_i$ of the feature $X_i$ ($i=1,2,\ldots,n$). 
While the simple $K$-means~\citep{macqueen1967some} minimizes the squared Euclidean distance between the feature instance and the cluster centers, UK-means~\citep{Chau2006uncertain} considers the expectation of the distance between feature and the cluster centers (with respect to $(p_1,p_2,\ldots,p_n)$). As pointed out in \citet{Lee2006reducing} and \citet{Cormode2008approximation}, UK-means is equivalent to $K$-means applied to the expectation of features $\bar{\chi}_i=\mathbb{E}(X_i)$ ($i=1,2,\ldots,n$). 
\citet{Cormode2008approximation} also provides approximation algorithms for the variants of UK-means (e.g., UK-median). 
\citet{Kriegel2008density} and \citet{Jiang2013clsutering} define distances between the densities $p_i,p_j$ and apply the simlarity-based clustering methods to the proposed distances (also see Section~\ref{subsec:discussion} for the related approach).

We last note that, clustering uncertain data is distinct from fuzzy clustering (see, e.g., \citet{bezdek1981pattern}), which outputs multiple assignments of clusters with deterministic input.

\begin{figure}[!ht]
    \centering
        \begin{minipage}{0.19\textwidth}
    \fbox{\includegraphics[width=0.95\textwidth]{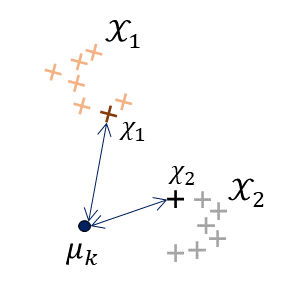}}
    \subcaption{GOC (optimistic)}
    \label{fig:comp_optimistic}
    \end{minipage}
    \hspace{0.1em}
    \begin{minipage}{0.19\textwidth}
    \includegraphics[width=\textwidth]{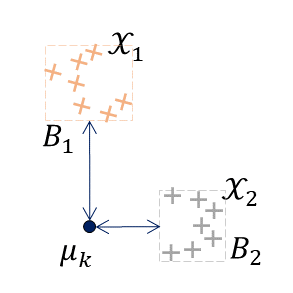}
    \subcaption{Hausdorff+Box}
    \label{fig:comp_Ngai}
    \end{minipage}
    \begin{minipage}{0.19\textwidth}
    \includegraphics[width=\textwidth]{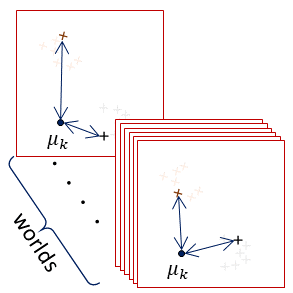}
    \subcaption{PW model}
    \label{fig:comp_PW}
    \end{minipage}
    \begin{minipage}{0.19\textwidth}
    \includegraphics[width=\textwidth]{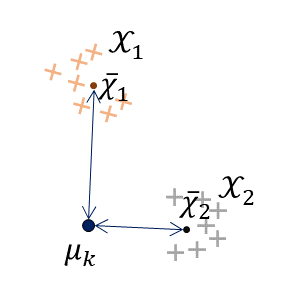}
    \subcaption{UK-means}
    \label{fig:comp_UK}
    \end{minipage}
    \begin{minipage}{0.19\textwidth}
    \includegraphics[width=\textwidth]{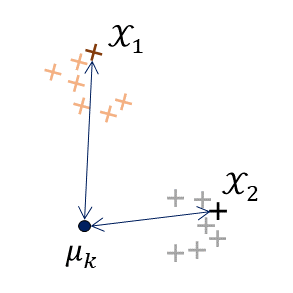}
    \subcaption{Pessimistic}
    \label{fig:comp_pessimistic}
    \end{minipage}
    \caption{Comparison of how to measure the distance from the cluster center $\mu_k$ (to the uncertainty set $\setX_i$). 
    (\subref{fig:comp_optimistic}) Proposed GOC considers the distance to the nearest instance, 
    (\subref{fig:comp_Ngai}) \citet{Ngai2006efficient} computes the distance to the boundary of the minimum box $B_i (\supset \setX_i)$, 
    (\subref{fig:comp_PW}) PW model considers all the possible worlds $\Xi_1,\Xi_2,\cdots \in \setA_n$, 
    (\subref{fig:comp_UK}) UK-means is equivalent to applying $K$-means to $\bar{\chi}_i=\mathbb{E}(X_i)$, 
    (\subref{fig:comp_pessimistic}) robust optimization applied to clustering (a slight modification of \citet{vo2016robust}) considers the distance to the farthest instance. }
\end{figure}

\section{Problem Setting}
\label{sec:problem_setting}

Let $n,d \in \mathbb{N}$; in addition, let $[n]$ denote a set $\{1,2,\ldots,n\}$, where 
$[n]$ also denotes the set of individuals to be clustered. 
Assume that individual $i \in [n]$ is associated with covariate $Z_i \in \mathbb{R}^{d}$ following a known distribution $\mathbb{P}_{Z_i}$. 
Typically, we assume a normal distribution $\mathbb{P}_{Z_i} = \normaldist(\zeta_i,\Sigma)$ with the observed covariate instance $\zeta_i \in \mathbb{R}^d$ of individual $i \in [n]$ and the positive-definite variance-covariance matrix $\Sigma \in \mathbb{R}^{d \times d}$. 
Thus, it is reasonable to employ a covariate uncertainty set $\mathcal{Z}_i \subset \mathbb{R}^{d}$ such that
\begin{align}
    \mathbb{P}(Z_i \in \setZ_i) \ge 1-\eta
    \label{eq:z_probability}
\end{align}
with a user-specified small threshold parameter $\eta > 0$. For the uncertainty set, we employ an upper-level set of the probability density function $p_i$, i.e.,
\begin{align}
\setZ_i:=\{Z \in \mathbb{R}^{d} \mid p_i(Z)>\varepsilon\},
\label{eq:uncertainty_set_upper_level}
\end{align}
for some $\varepsilon=\varepsilon(\eta)>0$, satisfying the inequality~(\ref{eq:z_probability}). 
For instance, by assuming that 
$p_i$ is a Laplace distribution $p_i(Z) \propto \exp(-\lambda \|Z - \zeta_i\|_1)$, the set~(\ref{eq:uncertainty_set_upper_level}) reduces to the box-type uncertainty set $\{Z \mid \|Z-\zeta_i\|_1 \le \rho \}$ used in \citet{vo2016robust}, and by assuming that $p_i$ is a standard normal distribution, $p_i(Z) \propto \exp(-\lambda \|Z-\zeta_i\|^2)$
(\ref{eq:uncertainty_set_upper_level}) reduces to the Euclidean ball $\{Z \mid \|Z-\zeta_i\|_2 \le \rho \}$ used in \citet{ben2002robust}. 
In general, an ellipsoid $\{Z \mid \langle Z-\zeta_i,\Sigma_i^{-1}(Z-\zeta_i)\rangle \le \rho\}$ equipped with a positive definite matrix $\Sigma_i \in \mathbb{R}^{d \times d}$ can be obtained by assuming a normal distribution $p_i(Z) \propto \exp(-\lambda (Z-\zeta_i, \Sigma_i^{-1} (Z-\zeta_i) \rangle)$. 
Herein, we assume that set $\setZ_i$ is specified in advance for each individual $i \in [n]$.

To capture the latent data structure, we further consider a pre-processing, that is, the application of a nonlinear transformation $f:\mathbb{R}^{d} \to \mathbb{R}^{q}$ into $Z_i$. 
This pre-processing step is also expected to remove redundant information that is harmful for a clustering analysis. 
An uncertainty set for this pre-processed feature $X_i=f(Z_i) \in \mathbb{R}^q$ can be expressed as 
\begin{align}
    \setX_i=f(\setZ_i):=\{f(Z) \mid Z \in \setZ_i\} \subset \mathbb{R}^{d},
    \label{eq:feature_ambiguity_set}
\end{align}
which satisfies the following probability inequality:
\[
\mathbb{P}(X_i \in \setX_i)
\ge 
\mathbb{P}(Z_i \in \setZ_i)
\overset{(\ref{eq:z_probability})}{\ge}
1-\eta.
\]

Interestingly, even if the covariate uncertainty set $\setZ_i$ is convex, the feature uncertainty set (\ref{eq:feature_ambiguity_set}) does not necessarily inherit the convexity.

Given (1) covariate uncertainty sets $\setZ_1,\setZ_2,\ldots,\setZ_n$ for individuals to be clustered, and (2) the nonlinear transformation $f:\mathbb{R}^d \to \mathbb{R}^q$ for pre-processing, our goal is to cluster the individuals $[n]$ by exploiting the feature uncertainty sets $\setX_1=f(\setZ_1),\setX_2=f(\setZ_2),\ldots,\setX_n=f(\setZ_n)$. 
It is possible for the feature uncertainty sets to be non-identical and non-convex, and thus their theoretically explicit forms are difficult to obtain. 
In this paper, we provide two examples.

\begin{ex} 
\label{ex:stars}
In galactic astronomy, it is important to identify groups of stars (\emph{sibling stars}) with similar orbits in the Milky Way \citep{roederer2018kinematics,Helmi2020ARA&A..58..205H}. 
Sibling stars were born in the same dwarf galaxies that were later disrupted and absorbed by the Milky Way; 
and therefore sibling stars provide an important insight into the history of the Milky Way. 
Because an orbital period is typically $\sim 10^{8}-10^9$ years, 
humans cannot monitor the entire orbit of a star.
Instead, to find sibling stars, we estimate the orbital properties from 
the instantaneous position $\zeta_i^{(1)} \in \mathbb{R}^3$ (which is an instance of the covariate $Z^{(1)}_i \in \mathbb{R}^3$) and velocity $\zeta_i^{(2)} \in \mathbb{R}^3$ (which is an instance of $Z^{(2)}_i \in \mathbb{R}^3$) of each star 
at the current epoch, by assuming the Galactic gravitational potential. 
See Figure~\ref{fig:KH1}(\subref{fig:zeta123456}) for the six elements in $\zeta_i=(\zeta_i^{(1)},\zeta_i^{(2)}) \in \mathbb{R}^6$. 
Usually, we apply a nonlinear function $f$ to $\zeta_i \in \mathbb{R}^6$ to derive a three-dimensional quantity called an \emph{orbital action} $\chi_i=f(\zeta_i) \in \mathbb{R}^3$, that encapsulates the stellar orbital properties \citep{Binney2008}. 
Whereas the position $\zeta^{(1)}_i$ and velocity $\zeta^{(2)}_i$ change as a function of time, 
the orbital action feature $\chi_i$ is a conserved quantity (see Figure~\ref{fig:KH_visualize_simulation} in Appendix~\ref{subsec:visualization_of_the_simulation} for an illustration of the conserved orbital actions). 
Because sibling stars have similar orbital actions, we apply clustering to the pre-processed feature $\{\chi_i\}_{i=1}^{n}$ instead of the direct observation $\{\zeta_i\}_{i=1}^{n}$. 
Note that the function $f$ omits the remaining information on the instantaneous \emph{orbital phase}, which does not help in finding sibling stars. 

\end{ex}

\begin{figure}[!ht]
\centering
\begin{minipage}{0.49\textwidth}
\centering
\includegraphics[width=0.75\textwidth]{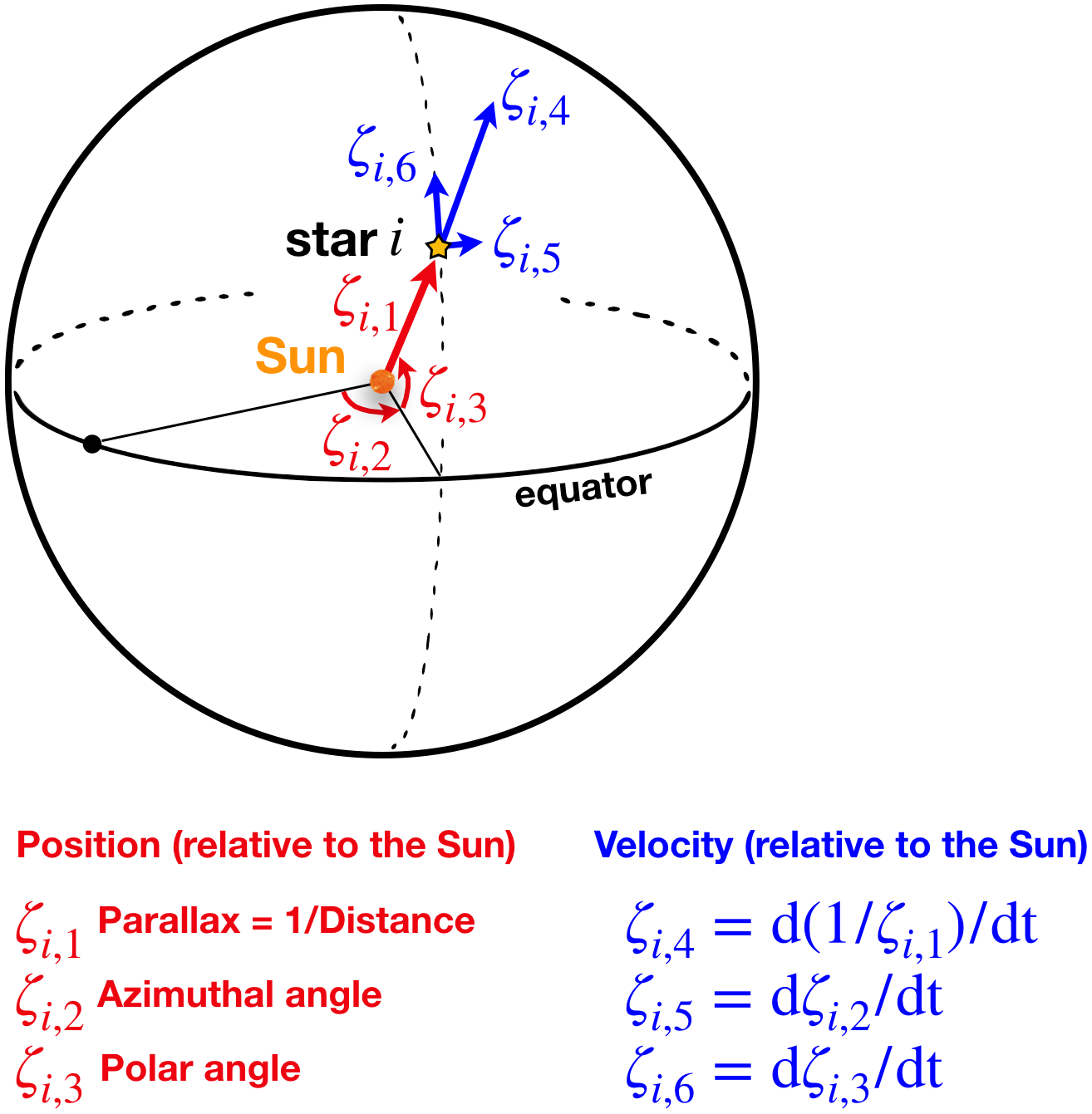}
\subcaption{Stellar position and velocity observed from the Sun.}
\label{fig:zeta123456}
\end{minipage}
\begin{minipage}{0.49\textwidth}
\centering
\includegraphics[width=0.75\textwidth]{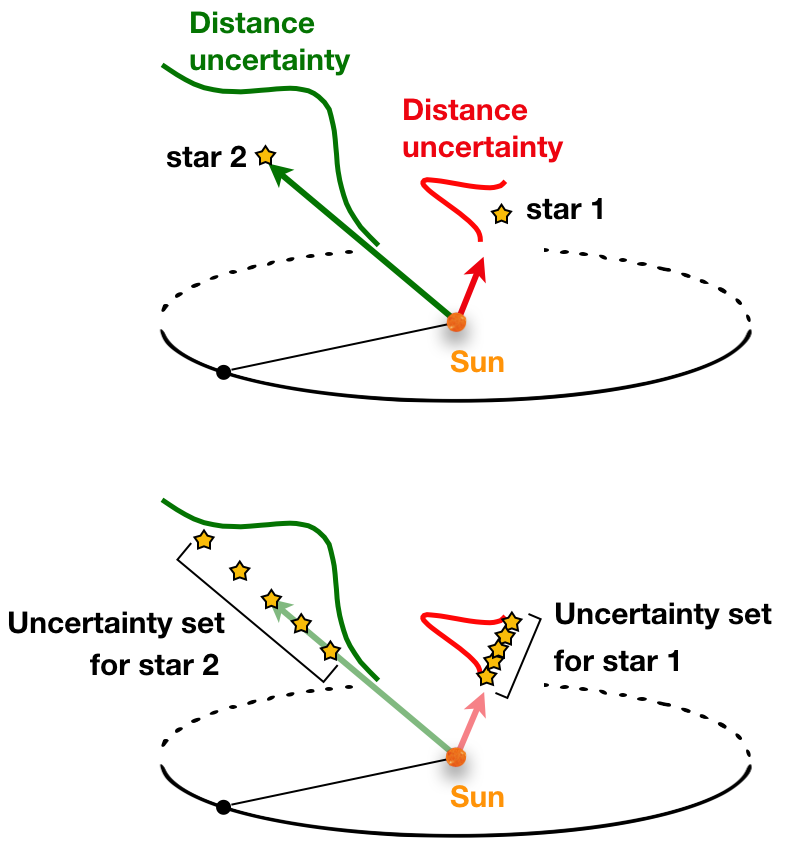}
\subcaption{Uncertainty sets for individual stars.}
\label{fig:uncertainty}
\end{minipage}
\caption{An illustration of Example~\ref{ex:stars}.}
\label{fig:KH1}
\end{figure}

Interestingly, in Example~\ref{ex:stars}, the covariate $Z_i^{(1)}$ for the instantaneous position $\zeta_i^{(1)}$ of star $i$ may have a larger uncertainty depending on the stellar properties (e.g., fainter stars have a larger uncertainty) or the cadence of the observations (e.g., stars observed less frequently have a larger uncertainty). 
See Figure~\ref{fig:KH1}(\subref{fig:uncertainty}). 
Therefore, the scatter of the uncertainty set depends on the individual star $i \in [n]$. 
We provide synthetic datasets demonstrating Example~\ref{ex:stars}. For further details, see Section~\ref{subsec:synthetic_dataset_generation} and Appendix~\ref{subsec:detailed_descriptions_of_synthetic_dataset}.

\begin{ex} 
\label{ex:dimensionality_reduction}
For the general clustering problem of individuals $[n]$ equipped with observed covariate instances $\zeta_i \in \mathbb{R}^d$ ($i \in [n]$), we can assume that $\zeta_i$ is an instance of the covariate $Z_i \in \mathbb{R}^{d}$ following the distribution $\mathbb{P}_{Z_i}$, which is typically a normal distribution $\mathbb{P}_{Z_i}=\normaldist(\zeta_i,\Sigma_i)$ equipped with a positive-definite variance-covariance matrix $\Sigma_i \in \mathbb{R}^{d \times d}$. 
The uncertainty set, $\setZ_i$, can be specified by (\ref{eq:uncertainty_set_upper_level}). 
To cluster individuals $i \in [n]$, a nonlinear dimensionality reduction, including a kernel principal component analysis~\citep{scholkopf1998nonlinear}, can be applied beforehand to the observed covariate instances $\{\zeta_i\}_{i=1}^{n}$. 
Using the nonlinear transformation $f:\mathbb{R}^{d}\to\mathbb{R}^{q}$ for a dimensionality reduction ($q<d$), we obtain the pre-processed feature instance $\chi_i = f(\zeta_i)$, and the feature uncertainty set can be specified by $\setX_i=f(\setZ_i)$.  
\end{ex}

\section{Greedy and Optimistic Clustering Algorithm}

In Section~\ref{subsec:empirical_uncertainty_set}, we first define an empirical set that approximates feature uncertainty set $\setX_i$ through a synthetic data generation over the specified covariate uncertainty set $\setZ_i$. 
Using this empirical set, in Section~\ref{subsec:greedy_optimistic_clustering}, we propose the GOC algorithm for clustering with optimism,
which exploits a user-specified (arbitrary) clustering oracle $\clustering$.  The proposed algorithm equipped with a simple $K$-means clustering is further discussed in Section~\ref{subsec:GOC_kmeans}.

\subsection{Empirical Feature Uncertainty Set}
\label{subsec:empirical_uncertainty_set}
Because of the nonlinear function $f:\mathbb{R}^d \to \mathbb{R}^q$ used in the pre-processing, the theoretically explicit form of the uncertainty set $\setX_i=f(\setZ_i)$ for the pre-processed feature $X_i=f(Z_i)$ is difficult to obtain. 
Therefore, we approximate set $\setX_i$ through synthetic data generation using the user-specified set $\setZ_i$ and nonlinear function $f$.

To approximate set $\setX_i$, we employ $m_i$ instances of the $d$-dimensional random variable following a uniform distribution over set $\setZ_i \subset \mathbb{R}^{d}$, i.e.,
\begin{align}
    \zeta_{i}^{(1)},\zeta_{i}^{(2)},\ldots,\zeta_{i}^{(m_i)} 
    \text{ are instances i.i.d. drawn from }
    \text{Unif.}(\setZ_i),
\label{eq:draw_Z}
\end{align}
and define an empirical feature uncertainty set as follows:
\begin{align}
    \tilde{\setX}_i^{(m_i)}:=\{\chi_{i}^{(j)}\}_{j=1}^{m_i} \subset \mathbb{R}^{q}, 
    \: 
    \chi_{i}^{(j)} = f(\zeta_{i}^{(j)}), 
    \quad 
    (j=1,2,\ldots,m_i).
    \label{eq:empirical_uncertainty_set}
\end{align}
$\{m_i\}_{i=1}^{n} \subset \mathbb{N}$ are hyperparameters, typically, $m_i = 100$. 
By assuming the following non-degenerate condition on function $f$ with the Lebesgue measure $\mathcal{L}$ (over the $d$-dimensional Euclidean space $\mathbb{R}^{d}$), i.e.,
\begin{align}
    \inf_{\chi \in \setX_i} \mathcal{L}(\{Z \in \setZ_i \mid \|f(Z)-\chi\|_2 < \varepsilon\})>0, 
    \quad \forall \varepsilon>0,
\label{eq:non-degenerate}
\end{align}

proposition~\ref{prop:uncertainty_set_approximation} proves that the set $\tilde{\setX}_i^{(m_i)}$ approximates the uncertainty set $\setX_i$.

\begin{prop}
\label{prop:uncertainty_set_approximation}
Assume that $f:\mathbb{R}^{d} \to \mathbb{R}^{q}$ satisfies the non-degenerate condition~(\ref{eq:non-degenerate}). 
For any $\chi \in \setX_i$, it holds that 
$\min_{\chi' \in \tilde{\setX}_i^{(m_i)}} \|\chi-\chi'\|_2 \to 0$ in probability, for $m_i \to \infty$. 
\end{prop}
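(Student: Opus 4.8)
The plan is to reduce the nearest-neighbour statement to a coverage (coupon-collector-type) argument on the covariate side, where uniform sampling turns the relevant probability into an explicit ratio of Lebesgue measures. Fix an arbitrary $\chi \in \setX_i$ and an arbitrary tolerance $\varepsilon > 0$. Since $\min_{\chi' \in \tilde{\setX}_i^{(m_i)}} \|\chi-\chi'\|_2 \ge 0$ always holds, convergence in probability to $0$ is equivalent to showing that $\mathbb{P}\bigl(\min_{\chi' \in \tilde{\setX}_i^{(m_i)}} \|\chi-\chi'\|_2 \ge \varepsilon\bigr) \to 0$ as $m_i \to \infty$.

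First I would translate the event from the feature side to the covariate side. Because $\chi_i^{(j)} = f(\zeta_i^{(j)})$, the condition $\|\chi - \chi_i^{(j)}\|_2 < \varepsilon$ is equivalent to $\zeta_i^{(j)} \in A_\varepsilon$, where $A_\varepsilon := \{Z \in \setZ_i \mid \|f(Z) - \chi\|_2 < \varepsilon\}$ is precisely the set appearing in the non-degenerate condition~(\ref{eq:non-degenerate}). Consequently the event $\{\min_j \|\chi - \chi_i^{(j)}\|_2 \ge \varepsilon\}$ coincides with $\{\zeta_i^{(j)} \notin A_\varepsilon \text{ for all } j\}$, i.e., the event that none of the $m_i$ draws lands in $A_\varepsilon$.

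Next I would evaluate the per-draw hitting probability. Since the draws~(\ref{eq:draw_Z}) are i.i.d.\ uniform on $\setZ_i$ — which presumes $0 < \mathcal{L}(\setZ_i) < \infty$ for the uniform law to exist, and measurability of $A_\varepsilon$, which holds whenever $f$ is measurable — we obtain $p_\varepsilon := \mathbb{P}(\zeta_i^{(j)} \in A_\varepsilon) = \mathcal{L}(A_\varepsilon)/\mathcal{L}(\setZ_i)$, a single constant independent of $j$ and $m_i$. The non-degenerate condition~(\ref{eq:non-degenerate}) supplies the crucial positivity, since $\mathcal{L}(A_\varepsilon) \ge \inf_{\chi'' \in \setX_i} \mathcal{L}(\{Z \in \setZ_i \mid \|f(Z) - \chi''\|_2 < \varepsilon\}) > 0$, whence $p_\varepsilon \in (0,1]$.

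Finally, by independence the probability that all $m_i$ draws avoid $A_\varepsilon$ factorises, giving $\mathbb{P}\bigl(\min_j \|\chi - \chi_i^{(j)}\|_2 \ge \varepsilon\bigr) = (1 - p_\varepsilon)^{m_i}$; since $1 - p_\varepsilon < 1$, this tends to $0$ as $m_i \to \infty$, which is exactly the asserted convergence in probability. I do not anticipate a genuine obstacle: the argument collapses to one line of geometric decay once the event has been rephrased on the covariate side. The only points demanding care are bookkeeping — verifying measurability of $A_\varepsilon$ and that $\mathcal{L}(\setZ_i)$ is finite and positive so that the uniform distribution and the ratio $\mathcal{L}(A_\varepsilon)/\mathcal{L}(\setZ_i)$ are well defined — together with the recognition that condition~(\ref{eq:non-degenerate}) is engineered precisely to force $p_\varepsilon > 0$, which is what prevents the limit from stalling at a strictly positive value.
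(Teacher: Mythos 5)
Your proposal is correct and follows essentially the same route as the paper's proof: rewrite the event $\{\min_{\chi'}\|\chi-\chi'\|_2 \ge \varepsilon\}$ as all $m_i$ i.i.d.\ draws missing the $\varepsilon$-ball around $\chi$, obtain the factorized probability $(1-p_\varepsilon)^{m_i}$, and invoke condition~(\ref{eq:non-degenerate}) to get $p_\varepsilon>0$ and hence geometric decay. Your version is slightly more explicit in translating the event to the covariate side and expressing $p_\varepsilon$ as the Lebesgue-measure ratio $\mathcal{L}(A_\varepsilon)/\mathcal{L}(\setZ_i)$, which the paper leaves implicit in the constant $c$, but the argument is the same.
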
 

\begin{proof} 
Let $B_{\varepsilon}(\chi)$ be a ball centered at $\chi \in \mathbb{R}^q$ with radius $\varepsilon>0$; in addition, let 
$\overline{B}=\mathbb{R}^q \setminus B$ denote the complement of set $B \subset \mathbb{R}^q$. 
This assertion is proved by $\mathbb{P}(\min_{\chi' \in \tilde{\setX}_i^{(m_i)}}\|\chi-\chi'\|_2 > \varepsilon)=\mathbb{P}(\tilde{\setX}_i^{(m_i)} \subset \overline{B_{\varepsilon}(\chi)})=\{1-\mathbb{P}(X_i \in B_{\varepsilon}(\chi)) \}^{m_i}=:(1-c)^{m_i} \to 0$ for $m_i \to \infty$ for any $\varepsilon>0$ because 
$c>0$ follows from the condition~(\ref{eq:non-degenerate}).
\end{proof}

Taking an arbitrary feature instance $\chi \in \setX_i$, the set $\tilde{\setX}_i^{(m_i)}$ with a sufficiently large $m_i \in \mathbb{N}$ has an entry $\chi' \in \tilde{\setX}_i^{(m_i)}$ sufficiently close to $\chi$, 
for which we can expect that 
\begin{align}
\argmin_{\chi \in \tilde{\setX}_i^{(m_i)}}L(\chi) 
\, 
\approx
\, 
\argmin_{\chi \in \setX_i}L(\chi)
\label{eq:approximation_of_argmin}
\end{align}
for a wide class of functions $L:\mathbb{R}^q \to \mathbb{R}_{\ge 0}$. 
This proposition can be applied to (\ref{eq:update_X}) in GOC (under some mild assumptions).

\subsection{Greedy and Optimistic Clustering Algorithm }
\label{subsec:greedy_optimistic_clustering}

In this section, we propose a greedy and optimistic clustering~(GOC) algorithm, that greedily seeks the feature candidates $\Xi=(\chi_1,\chi_2,\ldots,\chi_n)$ over a set 
\[
   \setA_n := \tilde{\setX}_1^{(m_1)} \times \tilde{\setX}_2^{(m_2)} \times \cdots \tilde{\setX}_n^{(m_n)},
\]
yielding more condensed clusters. 
Given the initial feature candidates $\Xi(0)=(\xi_1(0),\xi_2(0),\ldots,\xi_n(0)) \in \setA_n$, initial number of clusters $K(0) \in [n]$, initial cluster assignments $\hat{\bs c}(0) \in [K(0)]^n$, 
and an user-specified \emph{arbitrary} clustering oracle $\clustering(\Xi,K)$ (e.g., $K$-means), which outputs the cluster assignments of individuals $[n]$ by taking an instance $\Xi \in \setA_n$ (as well, we can input the number of clusters $K \in \mathbb{N}$, initial cluster centers, and some additional parameters, if necessary), GOC iterates the following steps: at iteration $t=1,2,\ldots,T$, 

\begin{enumerate}[{(I)}]

\item The designated number of clusters is updated. 
Typically, we may employ a constant $K(t) = K$ or the number of clusters found in the previous step $K(t) = \sum_{k=1}^{K(t-1)}\mathbbm{1}\left( \sum_{i=1}^{n}\mathbbm{1}(\hat{c}_i(t-1)=k)>1 \right)$.

\item Temporal cluster assignments $\hat{\bs c}^{\dagger}(t) \in [K(t)]^n$ are obtained using the clustering oracle:
\[
\hat{\bs c}^{\dagger}(t) \leftarrow \clustering(\Xi (t-1);K(t)). 
\] 

\item Feature candidates and cluster assignments are updated to $\Xi(t)=(\chi_1(t),\chi_2(t),\ldots,\chi_n(t))$ and $\hat{\bs c}(t)=(\hat{c}_1(t),\hat{c}_2(t),\ldots,\hat{c}_n(t))$, respectively, by applying
\begin{align} 
\chi_i(t)
&\leftarrow 
\argmin_{\chi \in \tilde{\setX}_i^{(m_i)}} 
\min_{k \in [K(t)]}
\left\{ \|\chi - \hat{\mu}_k(\Xi(t-1),\hat{\bs c}^{\dagger}(t))\|_2^2 + \lambda \text{Pen}_i(\chi) \right\}, \label{eq:update_X} \\
\hat{c}_i(t)
&\leftarrow 
\argmin_{k \in [K(t)]}
\|\chi_i(t)-\hat{\mu}_k(\Xi(t-1),\hat{\bs c}^{\dagger}(t))\|_2
\nonumber 
\end{align}
$(i \in [n])$, where 
\begin{align}
\hat{\mu}_k(\Xi,\hat{\bs c}) := \frac{\sum_{i=1}^{n} \mathbb{1}(\hat{c}_i=k)\chi_i}{\sum_{i=1}^{n} \mathbb{1}(\hat{c}_i=k)} \in \mathbb{R}^d
\label{eq:mu_k}
\end{align}
denotes the cluster center, $\text{Pen}_i(X)$ denotes a user-specified penalty term (e.g., $\text{Pen}_i(X):=\|X-\chi_i'\|_2^2$ for some $\chi_i' \in \mathbb{R}^q$), and $\lambda \ge 0$ is a hyperparameter. 
Note that the penalty term $\text{Pen}_i$ may depend on the individual $i=1,2,\ldots,n$. 
\end{enumerate}

Steps (I)--(III) are repeated until convergence is reached. 
See Figure~\ref{fig:illustration_of_GOC} for an illustration, and Section~\ref{subsec:GOC_kmeans} for an interpretation of the GOC algorithm using $K$-means clustering.

\begin{figure}[!ht]
\centering
\begin{minipage}{0.32\textwidth}
\centering
\includegraphics[width=\textwidth]{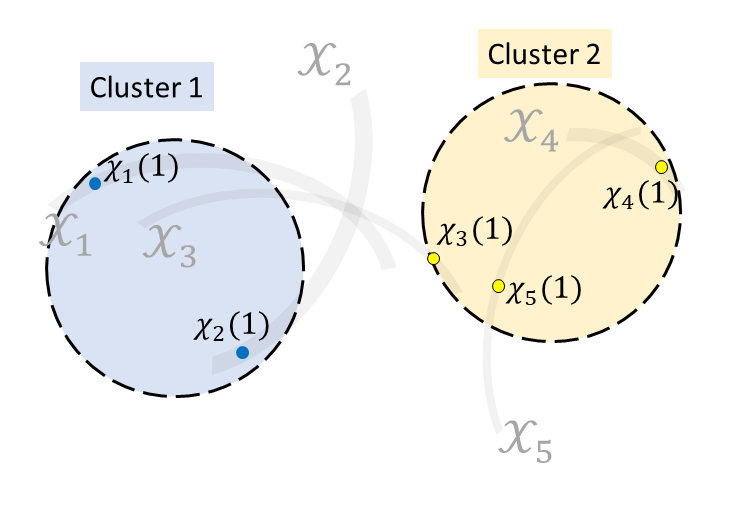}
\subcaption{Step (II): Temporal cluster assignments are obtained as $\hat{c}_1^{\dagger}(2)=\hat{c}_2^{\dagger}(2)=1,\hat{c}_3^{\dagger}(2)=\hat{c}_4^{\dagger}(2)=\hat{c}_5^{\dagger}(2)=2$.}
\end{minipage}
\begin{minipage}{0.32\textwidth}
\centering
\includegraphics[width=\textwidth]{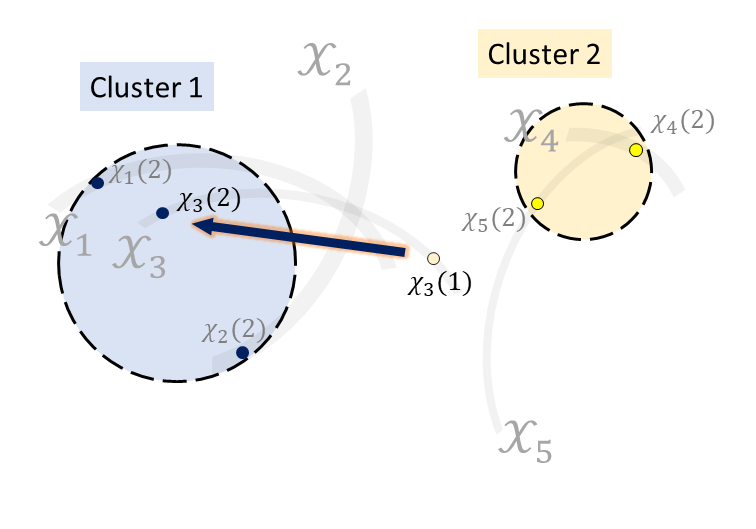}
\subcaption{Step (III): Feature candidates and cluster assignments are updated  ($\hat{c}_3^{\dagger}(2)=2$ is reassigned to $\hat{c}_3(2)=1$).}
\end{minipage}
\hspace{1em}
\begin{minipage}{0.32\textwidth}
\centering
\includegraphics[width=\textwidth]{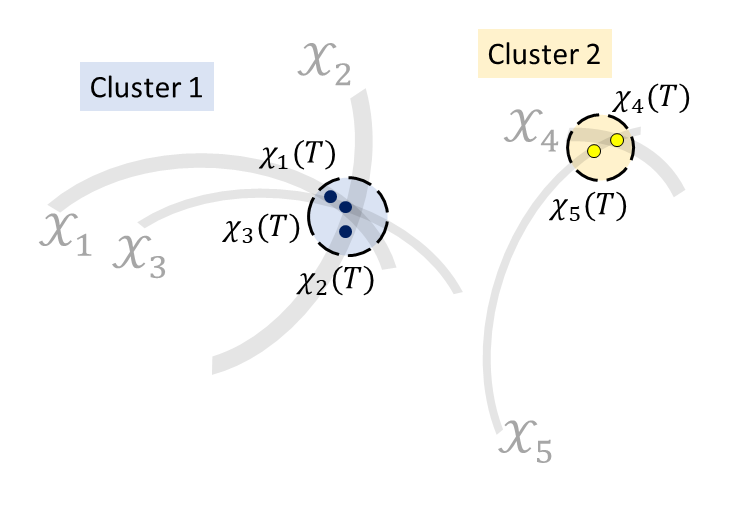}
\subcaption{After a sufficiently large number of iterations $T \in \mathbb{N}$, we expect to obtain the condensed clusters.}
\end{minipage}
\caption{An illustration of the GOC algorithm (at the iteration $t=2$), which iteratively updates the cluster assignments and feature candidates selected from the feature uncertainty sets colored in grey.}
\label{fig:illustration_of_GOC}
\end{figure}

\paragraph{Computational complexity.} 
Step (II) requires the computational complexity to be exactly the same as the clustering oracle $\clustering$ (typically, $O(n^2)$), and step (III) can be solved through a linear search, which requires complexity $O(m_i K(t-1))=O(m_i K(0))$ (typically smaller than $O(n^2)$). Therefore, the overall complexity is approximately $O(T(n^2+m_iK(t-1)))=O(Tn^2)$ with $T \in \mathbb{N}$ iterations, where $T = 10$ is a sufficient number in our numerical experiments, as demonstrated in Experiment~3 described in Section~\ref{sec:experiments}.

\subsection{Interpretation of GOC Algorithm with $K$-means Clustering}
\label{subsec:GOC_kmeans}

In this section, we interpret the results of the proposed GOC algorithm using $K$-means clustering~\citep{macqueen1967some} with a constant number of clusters $K(t) = K$. 
Let 
\[ 
\mathcal{S}_{K,n}:=\left\{ 
\{s_{ik}\}_{i \in [n],k \in [K]} \subset \{0,1\} \, \mid \, \sum_{k=1}^{K} s_{ik}=1, \: \forall i \in [n]
\right\}
\]
be a set of cluster-assignment indicators $\bs s=\{s_{ik}\}$, where $s_{ik}=1$ denotes that individual $i$ is assigned to cluster $k$ (and $0$ otherwise). 
Using a loss function 
\[
    \ell(\Xi;\bs s)
    :=
    \min_{\mu_1,\mu_2,\ldots,\mu_K \in \mathbb{R}^q}
    \sum_{k=1}^{K}
    \sum_{i=1}^{n}
    s_{ik}
    \| \chi_i - \mu_k \|_2^2,
\]
conventional $K$-means clustering applied to a fixed instance $\Xi=(\chi_1,\chi_2,\ldots,\chi_n) \in \setA_n$ computes the cluster assignments $\hat{\bs c}=(\hat{c}_1,\ldots,\hat{c}_n) \in [K]^n$ by solving the following problem: 
\[
    (\text{Conventional}) \qquad 
    \hat{c}_i:=\argmax_{k \in [K]}s_{ik},
    \quad 
    \hat{\bs s}
    :=
    \argmin_{\bs s \in \mathcal{S}_{K,n}}
    \left\{
    \ell(\Xi;\bs s)
    \right\}.
\]

Further, the GOC algorithm equipped with $K$-means clustering is expected to solve the following minimization problem in a greedily manner:
\[
(\text{Optimistic}) \qquad 
\hat{c}_i^{(\text{GOC})} = \argmax_{k \in [K]} \hat{s}_{ik}^{(\text{GOC})},
\quad
\hat{\bs s}^{(\text{GOC})}
:=
\argmin_{\bs s \in \mathcal{S}_{K,n}}
\min_{\tilde{\Xi} \in \setA_n}
\left\{ 
   \ell(\tilde{\Xi};\bs s)
    +
    \lambda \sum_{i=1}^{n}\text{Pen}_i(\tilde{\chi}_i)
\right\}
\]
for $\tilde{\Xi}=(\tilde{\chi}_1,\tilde{\chi}_2,\ldots,\tilde{\chi}_n)$. 
By contrast, we may consider a pessimistic variant of the GOC algorithm, i.e., the greedy and pessimistic clustering (GPC), when solving the following problem:
\[
(\text{Pessimistic}) \qquad 
\hat{c}_i^{(\text{GPC})} = \argmax_{k \in [K]} \hat{s}_{ik}^{(\text{GPC})},
\quad
    \hat{\bs s}^{(\text{GPC})}
    :=
\argmin_{\bs s \in \mathcal{S}_{K,n}}
\max_{\tilde{\Xi} \in \setA_n}
\left\{ 
   \ell(\tilde{\Xi};\bs s)
    +
    \lambda \sum_{i=1}^{n}\text{Pen}_i(\tilde{\chi}_i)
\right\},
\]
the formulation of which achieves a robust optimization~\citep{ben2002robust,bertsimas2011theory}, minimizing the worst-case of the loss function. 
Furthermore, assuming that $\lambda = 0$ (i.e., no penalty for the candidates $\tilde{\Xi} \in \setA_n$) and $\setX_i=\{X \in \mathbb{R}^d \mid \|X-\mu_i\|_1 \le \rho\}$ is a box-shaped convex uncertainty set with a user-specified small threshold parameter $\rho>0$, GPC greedily solves the equivalent problem described in \citet{vo2016robust}, which is the only existing study applying robust optimization to clustering.

However, note that the set $\setX_i$ considered in this study is not the small box-shaped convex set considered in \citet{vo2016robust} (where \citet{vo2016robust} aims to attain robustness against a covariate perturbation but not the larger uncertainty considered herein), and the
GPC applied to our synthetic dataset achieved extremely low scores. 
In particular, because the feature candidates are updated to increase the scattering of each cluster from a pessimistic perspective, regardless of the initial number of clusters $K(0)$, GPC applied to our dataset finally outputs only one large cluster (i.e., $K(T)=1$ after a sufficiently large number of iterations $T$). 
We therefore did not apply GPC in our numerical experiments, as discussed in Section~\ref{sec:experiments}.

\section{Numerical Experiments}
\label{sec:experiments}

In Section~\ref{subsec:synthetic_dataset_generation}, we describe the synthetic orbital action datasets of the stars used in our numerical experiments. 
The experimental settings and results are presented in Sections \ref{subsec:experimental_settings} and \ref{subsec:experimental_results}, respectively. 
Further disccusions are also provided: we consider another approach for exploiting the uncertainty sets in Section~\ref{subsec:discussion} and 
convex feature ambiguity sets in Section~\ref{subsec:discussion2_convex}.
We provide the datasets and \verb|R| source codes used to produce the experimental results at \url{https://github.com/oknakfm/GOC}.

\subsection{Realistic Datasets: Orbital Properties of the Stars in the Milky Way}
\label{subsec:synthetic_dataset_generation}

We employed $10$ synthetic orbital action datasets of stars generated through a numerical simulation mimicking the formation process of the Milky Way. 
Each of these $10$ datasets consists of pre-computed empirical uncertainty sets $\{\tilde{\setX}_1^{(m_1)},\tilde{\setX}_2^{(m_2)},\ldots,\tilde{\setX}_n^{(m_n)}\}$ with $n = 275$ stars, where each star $i \in [n]$ is assigned to one of $K_* = 50$ true clusters and is associated with the empirical uncertainty set $\tilde{\setX}_i^{(m_i)} \subset \mathbb{R}^3$ of size $m_i = 101$. 
The member stars of each true cluster are born in the same dwarf galaxy, which we refer to as {\it sibling stars}. 
Sibling stars have similar orbital actions (i.e., similar orbital properties), and our task is to find sibling stars by leveraging uncertainty sets. 
Figure~\ref{fig:three_instances_of_dataset} shows the first three datasets used.

\begin{figure}[!ht]
\centering
\begin{minipage}{0.33\textwidth}
\centering
\includegraphics[width=0.9\textwidth]{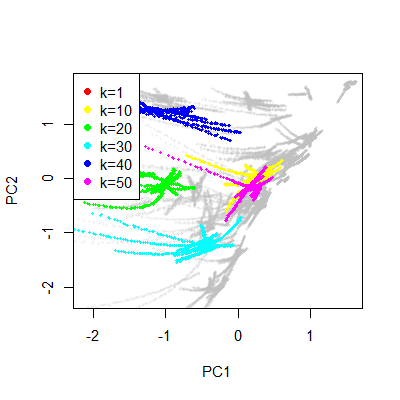}
\subcaption{Instance $1$}
\end{minipage}
\begin{minipage}{0.33\textwidth}
\centering
\includegraphics[width=0.9\textwidth]{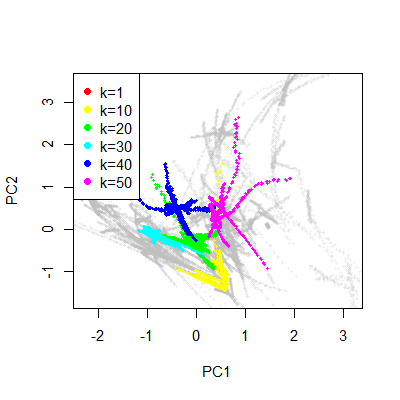}
\subcaption{Instance $2$}
\end{minipage}
\begin{minipage}{0.33\textwidth}
\centering
\includegraphics[width=0.9\textwidth]{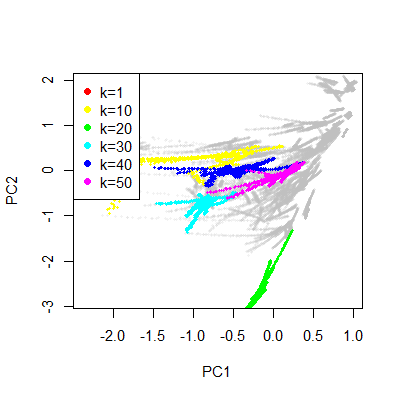}
\subcaption{Instance $3$}
\end{minipage}
\caption{Principal components of the first three datasets. The empirical uncertainty sets of stars belonging to each of the underlying true clusters $k = 1,10,20,30,40,50$ are colored separately. The uncertainty sets for the remaining clusters are colored in gray.}
\label{fig:three_instances_of_dataset}
\end{figure}

In addition to a summary of this dataset, Appendix~\ref{subsec:detailed_descriptions_of_synthetic_dataset} provides more detailed descriptions of the physical simulation.

\bigskip \noindent 
\textbf{Summary of a dataset} is described herein: we generate $10$ different datasets by following the same procedure. 
We generate the uncertainty sets of $n = 275$ stars, which form $K_* = 50$ clusters with similar orbits. 
Following the numerical simulation shown in Appendix~\ref{subsec:detailed_descriptions_of_synthetic_dataset}, we obtain the instance $\zeta_i^*:=(\zeta_i^{*(1)}, \zeta_i^{*(2)})=(\zeta_{i,1}^*,\zeta_{i,2}^*,\ldots,\zeta_{i,6}^*)$,  consisting of the {\it true} current position $\zeta_i^{*(1)}$ and velocity $\zeta_i^{*(2)}$ of the $i$th star ($i \in [n]$) within the observable space. Note that the observable space differs from the usual Cartesian coordinate system. For example, in astronomy, the three-dimensional stellar position $\zeta_i^{*(1)}$ is expressed by the so-called stellar parallax (which is the reciprocal of the stellar distance from Earth ) and the two-dimensional position in the sky. See Figure~\ref{fig:KH1}(\subref{fig:zeta123456}) for an illustration.   
The observational uncertainty in $\zeta^*_{i,\ell}$ is denoted as $\sigma_{i,\ell}$ and is computed from the empirical relationship known in the field of galactic astronomy~(see \citet{GaiaEDR3_2021A&A...649A...1G}). 
Note that, owing to observational difficulty, one of the three components in $\zeta_i^{*(1)}$ (stellar parallax) is associated with a large uncertainty, whereas the other two components in $\zeta_i^{*(1)}$ (two-dimensional position in the sky) and all three components in $\zeta_i^{*(2)}$ are associated with negligible uncertainty.

To mimic an actual observation in which we do not know the {\it true} value of each observable, 
we incorporate randomness into the quantity $\zeta_i^*=(\zeta_{i,1}^*,\ldots,\zeta_{i,6}^*)$, i.e.,
each entry $\zeta_{i,\ell}$ in $\zeta_i=(\zeta_{i,1},\ldots,\zeta_{i,6})$ is drawn independently from $N(\zeta_{i,\ell}^*,\sigma_{i,\ell}^2)$. 
Subsequently, we define the uncertainty set as follows:
\[
    \setZ_i := \{Z=(z_1,z_2,\ldots,z_6) \in \mathbb{R}^6 \mid |z_{\ell}-\zeta_{i,\ell}| \le 2\sigma_{i,\ell}, \ell=1,2,\ldots,6\}
\]
and draw i.i.d. $m_i$ instances $\zeta_i^{(1)},\zeta_i^{(2)},\ldots,\zeta_i^{(m_i)}$ from a uniform distribution $\text{Unif.}(\setZ_i)$ for $i \in [n]$. 
 
Finally, we compute the empirical uncertainty set $\tilde{\setX}_i^{(m_i)}$ using Eq.~(\ref{eq:empirical_uncertainty_set}); that is, 
$\tilde{\setX}_i^{(m_i)}=\{\chi_i^{(j)}\}_{j=1}^{m_i},\chi_i^{(j)}=f(\zeta_i^{(j)}) \in \mathbb{R}^3$. 
In particular, to transform the current positions and velocities of the stars into orbital actions, we employ a publicly available \verb|AGAMA| package~\citep{Vasiliev2018ascl.soft05008V,Vasiliev2019MNRAS.482.1525V} for the nonlinear transformation $f:\mathbb{R}^6 \to \mathbb{R}^3$, 
which effectively removes unnecessary information for finding clusters of sibling stars.

In the simplest implementation of the GOC algorithm, we can treat all elements of the uncertainty set for star $i$ equally. 
However, we can introduce a penalty term to prioritize (penalize) the elements of an uncertainty set that are closer to (farther from) the center of the uncertainty set, $\zeta_i$. 
For this purpose, we define the penalty of the $j$th candidate of star $i$ as 
$\text{Pen}_i(\chi_i^{(j)}) = (\zeta_{i,1}^{(j)} - \zeta_{i,1})^2/2\sigma_{i,1}^2$. 
Note that, because its uncertainty is the dominant source of uncertainty in $\chi_i$, we only consider the first component of $\zeta_i$, which corresponds to the stellar parallax.

\subsection{Experimental Settings} 
\label{subsec:experimental_settings}

\paragraph{Standardization.} 
Before conducting the experiments, we first standardize the empirical uncertainty sets (using both centering and scaling, such that $\sum_{i=1}^{n}\sum_{\zeta \in \tilde{\setX}_i^{(m_i)}}\zeta=0,(\sum_{i=1}^{n}m_i)^{-1}\sum_{i=1}^{n}\sum_{\zeta \in \tilde{\setX}_i^{(m_i)}}\zeta^2=1$) and the penalties (using scaling only, such that $\max_{i \in [n]} \max_{j} \text{Pen}_i(X_i^{j})=1$) for each dataset.

\paragraph{Clustering oracles.} We employ $K$-means~(using the standard \verb|stats| package in \verb|R| statistical software) and 
$K$-medoids~(using the \verb|ClusterR| package), which can take the cluster centers as their input, and for each iteration, we input the cluster center $\hat{\mu}_k=(\ref{eq:mu_k})$. 
We also employ two different implementations of a Gaussian mixture model(GMM), i.e.,
the \verb|GMM| function in the \verb|ClusterR| package, and the \verb|Mclust| function in the \verb|mclust| package, the latter of which (\verb|Mclust|) can specify models for the variance-covariance matrix $\Sigma_k$ of the Gaussian distribution (representing the cluster $k$). 
We employ the simplest EII model~($\Sigma_k=\sigma^2 I$ for a certain $\sigma>0$; more general VII and VVV models are also mentioned in the note of Experiment 4), and \verb|Mclust| automatically detects the number of clusters using BIC  (from $1,2,\ldots,K(t)$). See \citet{R_mclust} for further details.

\paragraph{Baselines.} For the baselines, we apply clustering oracles ($K$-means, $K$-medoids, and GMM) to the vectors
\begin{align}
    \bar{\chi}_i := \frac{1}{m_i}\sum_{\chi \in \tilde{\setX}_i^{(m_i)}} \chi \in \mathbb{R}^3
    \quad (i \in [n]), 
    \label{eq:representative_vector}
\end{align}
which represent the sample mean of each feature uncertainty set. 
Referring to \citet{Lee2006reducing}, the $K$-means applied to (\ref{eq:representative_vector}) also can be regarded as UK-means~\citep{Chau2006uncertain} by assuming that the feature $X_i$ follows a uniform distribution over the set $\tilde{\setX}_i^{(m_i)}$ (for $i \in [n]$).

\paragraph{Evaluation metrics.} 
We define the following scores, using the estimated clusters $\hat{\bs c}=(\hat{c}_1,\hat{c}_2,\ldots,\hat{c}_n) \in [K]^n$ as well as the true clusters $\bs c^*=(c^*_1,c^*_2,\ldots,c^*_n) \in [K_*]^n$, 
$n_{kl}:=\sum_{s=1}^{n}
\mathbbm{1}(\hat{c}_s=k)
\mathbbm{1}(c^*_s=l)$, 
$n_{k \cdot}:=\sum_{l=1}^{K_*}n_{kl}$, and 
$n_{\cdot l}:=\sum_{k=1}^{K}n_{kl}$. 

\begin{enumerate}

    \item Normalized mutual information $\text{NMI}(\hat{\bs c},\bs c^*)$ is defined by $2\mathcal{I}/(\mathcal{H}^{(1)}+\mathcal{H}^{(2)})$, using the mutual information $\mathcal{I}$ and entropy $\mathcal{H}$: \begin{align*} 
    \mathcal{I}
    &:=
    \sum_{k=1}^{K}\sum_{l=1}^{K_*} \frac{n_{kl}}{n}
    \log 
    \frac{n \cdot n_{kl}}{
        n_{\cdot l} \cdot n_{k \cdot}
    },
\quad 
    \mathcal{H}^{(1)}
    :=
    -\sum_{k=1}^{K} \frac{n_{k \cdot}}{n} 
    \log \frac{n_{k \cdot}}{n}, 
\quad 
    \mathcal{H}^{(2)}
    :=
    -\sum_{l=1}^{K_*} \frac{n_{\cdot l}}{n} 
    \log \frac{n_{\cdot l}}{n}.
    \end{align*}

    \item The \textbf{$F$-measure} $F(\hat{\bs c},\bs c_*)$ is defined as 
    \begin{align*} 
        \sum_{l=1}^{K_*}
        \frac{n_{\cdot l}}{n}
        \max_{k \in [K]}
        \left\{
            \left( \frac{n_{kl}}{n_{k \cdot}} + \frac{n_{kl}}{n_{\cdot l}} \right)^{-1}
            \left( \frac{2n_{kl}^2}{n_{k \cdot} n_{\cdot l}} \right)
        \right\}.
    \end{align*}
\end{enumerate}

Both the NMI and $F$-measure take values within $[0,1]$, and attain a value of $1$ if and only if the estimated clusters $\hat{\bs c}$ perfectly match the true clusters $\bs c^*$ (up to the permutation of the cluster labels).

\paragraph{Additional settings.} The clustering step used by the GOC algorithm employs the number of clusters appearing in the previous step, i.e., $K(t)=\sum_{k=1}^{K(t-1)}\mathbbm{1}\left(\sum_{i=1}^{n}\mathbbm{1}(\hat{c}_i(t-1)=k)>1 \right)$, and thus the number of clusters for the algorithm can be smaller than the (user-specified) initial number of clusters $K(0) \in \mathbb{N}$. 
We consider the GOC algorithm to reach convergence if the selected feature candidates are converged. More specifically, in Experiments $1$, $2$, and $4$, we consider the perfect convergence of the feature candidates (selected from the discrete set $\setA_n$); by allowing a tolerance on the convergence, we can terminate GOC with fewer iterations as observed in Experiment~$3$.

\subsection{Experimental Results}
\label{subsec:experimental_results}

We apply the GOC algorithm along with the clustering oracle $\clustering$ to synthetic datasets consisting of empirical uncertainty sets. 
For the baselines, we also apply the oracle $\clustering$ to the representative vectors $\{\bar{\chi}_i\}_{i=1}^{n}$ defined in (\ref{eq:representative_vector}). 
Whereas Experiments 1--3 computed only the GOC algorithm when applying $K$-means (as well as the corresponding baseline, i.e., $K$-means applied to the representative vectors), $K$-medoids and GMM were applied in Experiment 4.

\subsubsection*{Experiment 1: Fixed $K(0)$ with increasing $\lambda$.}

Table~\ref{table:exp_fixed_K_increasing_lambda} shows the NMI and $F$-measure with a fixed initial number of clusters $K(0)$ and increasing $\lambda$ (the coefficient of the penalty term).

\begin{table}[!ht]
\centering
\caption{$K$-means with fixed $K(0)$ and increasing coefficient $\lambda$ for the penalty term.}
\label{table:exp_fixed_K_increasing_lambda}
\subcaption{$K(0)=30$}
\begin{tabular}{llcccc} 
 \toprule 
 & & $\lambda=0$ & $\lambda=0.01$ & $\lambda=0.1$ & $\lambda=1$ \\ 
 \midrule 
 \multirow{2}{*}{NMI} & GOC  & $0.830\pm0.030$ & $0.834\pm0.023$ & $0.844\pm0.030$ & $0.821\pm0.021$ \\ 
 & Baseline  & \multicolumn{4}{c}{$0.808\pm0.020$} \\ 
  \cmidrule{2-6} 
\multirow{2}{*}{$F$-measure} & GOC  & $0.627\pm0.042$ & $0.641\pm0.026$ & $0.648\pm0.051$ & $0.618\pm0.027$ \\ 
 & Baseline  & \multicolumn{4}{c}{$0.604\pm0.024$}, \\ 
 \cmidrule{2-6} 
$\#$clusters & GOC & $28.3\pm1.34$ & $28.4\pm0.84$ & $28.5\pm0.85$ & $28.9\pm0.57$ \\ 
 $\#$iterations & GOC & $15.2\pm3.05$ & $15.8\pm3.74$ & $13.3\pm4.40$ & $7.3\pm1.70$ \\ 
 \bottomrule 
 \end{tabular}
\subcaption{$K(0)=50$}
\begin{tabular}{llcccc} 
 \toprule 
 & & $\lambda=0$ & $\lambda=0.01$ & $\lambda=0.1$ & $\lambda=1$ \\ 
 \midrule 
 \multirow{2}{*}{NMI} & GOC  & $0.880\pm0.024$ & $0.879\pm0.027$ & $0.871\pm0.024$ & $0.846\pm0.026$ \\ 
 & Baseline & \multicolumn{4}{c}{$0.839\pm0.026$}, \\ 
  \cmidrule{2-6} 
\multirow{2}{*}{$F$-measure} & GOC  & $0.750\pm0.039$ & $0.752\pm0.046$ & $0.736\pm0.041$ & $0.694\pm0.045$ \\ 
 & Baseline  & \multicolumn{4}{c}{$0.685\pm0.048$} \\ 
 \cmidrule{2-6} 
$\#$clusters & GOC & $46.1\pm1.10$ & $46.4\pm0.97$ & $47.2\pm1.23$ & $48.5\pm0.85$ \\ 
 $\#$iterations & GOC & $15.8\pm3.23$ & $15\pm3.33$ & $12.8\pm3.74$ & $7.4\pm1.84$ \\ 
 \bottomrule 
 \end{tabular}
\subcaption{$K(0) = 70$}
\begin{tabular}{llcccc} 
 \toprule 
 & & $\lambda=0$ & $\lambda=0.01$ & $\lambda=0.1$ & $\lambda=1$ \\ 
 \midrule 
 \multirow{2}{*}{NMI} & GOC & $0.877\pm0.020$ & $0.878\pm0.022$ & $0.875\pm0.018$ & $0.850\pm0.027$ \\ 
 & Baseline  & \multicolumn{4}{c}{$0.837\pm0.021$} \\ 
  \cmidrule{2-6} 
\multirow{2}{*}{$F$-measure} & GOC & $0.749\pm0.035$ & $0.747\pm0.036$ & $0.742\pm0.032$ & $0.706\pm0.051$ \\ 
 & Baseline  & \multicolumn{4}{c}{$0.673\pm0.035$} \\ 
 \cmidrule{2-6} 
$\#$clusters & GOC & $64.6\pm1.71$ & $64.2\pm1.55$ & $66.6\pm1.51$ & $67.8\pm1.32$ \\ 
 $\#$iterations & GOC & $11.7\pm1.50$ & $12.2\pm2.10$ & $11.9\pm2.03$ & $7.7\pm1.06$ \\ 
 \bottomrule 
 \end{tabular}
\end{table}

\paragraph{Results.} 
First, for the experiments with a small $\lambda$, the proposed GOC algorithm outperforms the baselines. Employing feature uncertainty sets also improves the NMI and $F$-measure. 
A large coefficient $\lambda$ of the penalty term degrades both the NMI and the $F$-measure; in addition, clustering with stars having a smaller penalty does not necessarily improve the clustering scores. 
Although the number of clusters used by the GOC algorithm can be small, the number is not significantly different from the initial number of clusters $K(0)$.

\subsubsection*{Experiment 2: Fixed $\lambda$ with increasing $K$.}
Table~\ref{table:exp_fixed_lambda_increasing_K} shows the NMI and $F$-measure with a fixed $\lambda$ and increased number of initial clusters $K(0)$.

\begin{table}[!ht]
\centering
\caption{$K$-means with fixed $\lambda=0.01$ and increasing $K(0)$.}
\label{table:exp_fixed_lambda_increasing_K} 
\begin{tabular}{llccccc} 
 \toprule 
 & & $K(0)=30$ & $K(0)=40$ & $K(0)=50$ & $K(0)=60$ & $K(0)=70$ \\ 
 \midrule 
 \multirow{2}{*}{NMI} & GOC  & $0.834\pm0.023$ & $0.868\pm0.023$ & $0.879\pm0.027$ & $0.874\pm0.027$ & $0.878\pm0.022$ \\ 
 & Baseline & $0.808\pm0.020$ & $0.832\pm0.026$ & $0.839\pm0.026$ & $0.839\pm0.027$ & $0.837\pm0.021$ \\ 
 \cmidrule{2-7} 
\multirow{2}{*}{$F$-measure} & GOC & $0.641\pm0.026$ & $0.719\pm0.037$ & $0.752\pm0.046$ & $0.741\pm0.044$ & $0.747\pm0.036$ \\ 
 & Baseline  & $0.604\pm0.024$ & $0.671\pm0.039$ & $0.685\pm0.048$ & $0.682\pm0.047$ & $0.673\pm0.035$ \\ 
 \cmidrule{2-7} 
$\#$clusters & GOC & $28.4\pm0.84$ & $37.6\pm0.84$ & $46.4\pm0.97$ & $55.9\pm1.52$ & $64.2\pm1.55$ \\ 
 $\#$iterations & GOC & $15.8\pm3.74$ & $15.5\pm5.32$ & $15.0\pm3.33$ & $13.5\pm1.96$ & $12.2\pm2.10$ \\ 
 \bottomrule 
 \end{tabular}
\end{table}

\paragraph{Results.} 
Because the true number of clusters is $K_* = 50$, both the GOC algorithm and the baselines demonstrate a better performance with $K(0) \ge 50$ than with $K(0) = 30,40$. 
Overall, with $K(0) \in \{30,40,50,60,70\}$, fewer iterations are required (for GOC convergence) for a larger $K(0)$, and a larger $K(0)$ tends to result in higher scores.

\subsubsection*{Experiment 3: Convergence.}
Although the convergence of the GOC algorithm is determined by the \emph{perfect} convergence of the feature candidates in Experiments $1$ and $2$ (because the candidates $\Xi(t)$ are selected over the discrete set $\setA_n$), 
we monitored the convergence of the GOC algorithm in a weaker sense. 
In particular, we monitored the convergences of (1) the cluster assignments and (2) the feature candidates. 
To evaluate the convergence, for each dataset, we computed two scores $\eta_1(t):=\text{NMI}(\hat{\bs c}(t),\hat{\bs c}(\infty))$ and $\eta_2(t):=n^{-1}\sum_{i=1}^{n} \|\chi_i(t)-\chi_i(\infty)\|_2^2$, where $\hat{\bs c}(\infty),\{\chi_i(\infty)\}_{i=1}^{n}$ denote the final cluster assignments and feature candidates of the GOC algorithm, respectively. 
We also computed the convergence of the NMI scores to the underlying true clusters by evaluating $\eta_3(t):=\text{NMI}(\hat{\bs c}(t),\bs c^*)/\text{NMI}(\hat{\bs c}(\infty),\bs c^*)$. 
Figure~\ref{fig:convergence} shows these scores (for each iteration $t$) for $10$ datasets for $K(0) \in \{30,50,70\}$ with a fixed $\lambda=0.01$. 

\begin{figure}[!ht]
\centering
\begin{minipage}{0.33\textwidth}
\centering
\includegraphics[width=\textwidth]{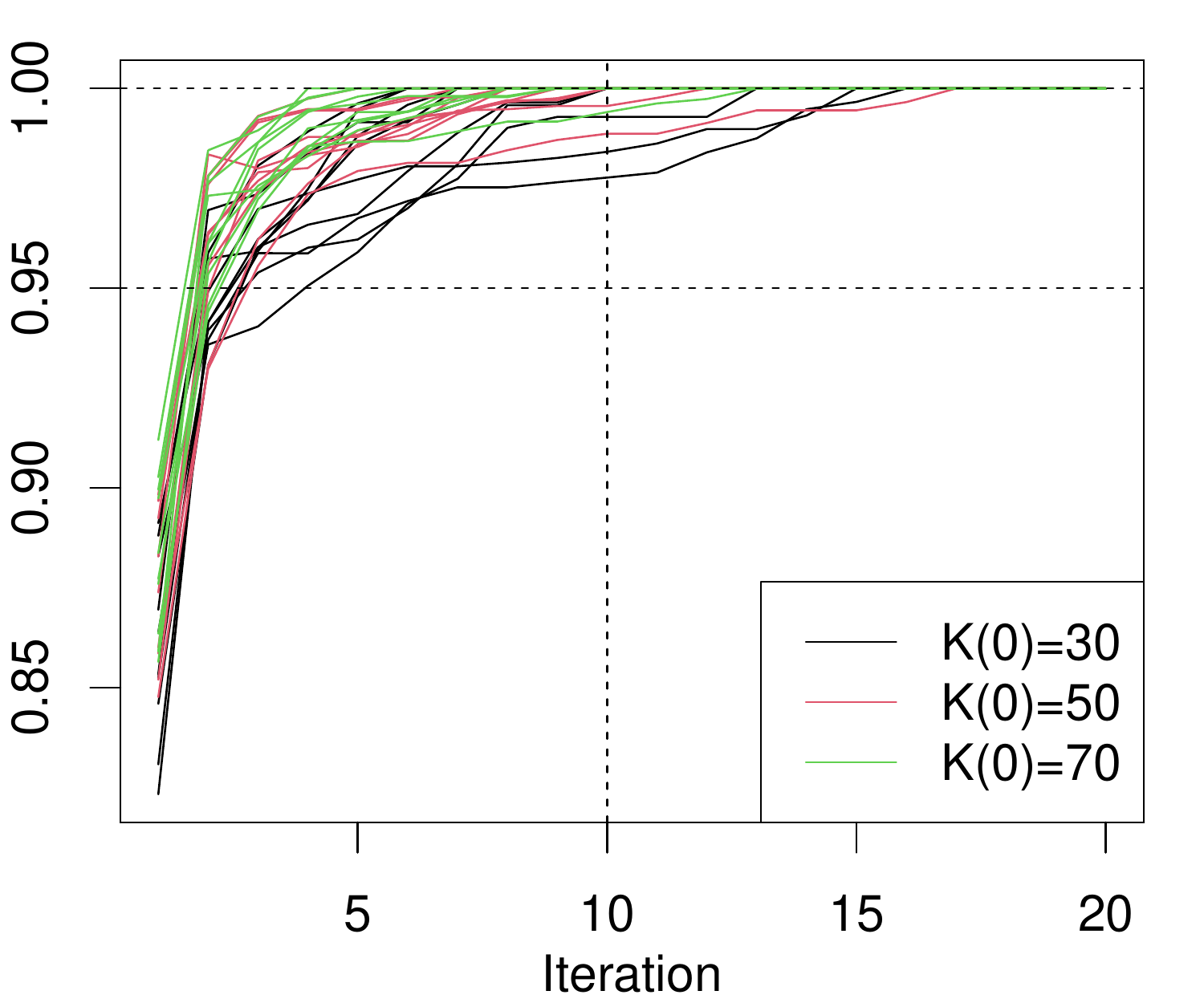}
\subcaption{Convergence of cluster assignments: $\eta_1(t)=\text{NMI}(\hat{\bs c}(t),\hat{\bs c}(\infty))$.}
\end{minipage}
\begin{minipage}{0.33\textwidth}
\centering
\includegraphics[width=\textwidth]{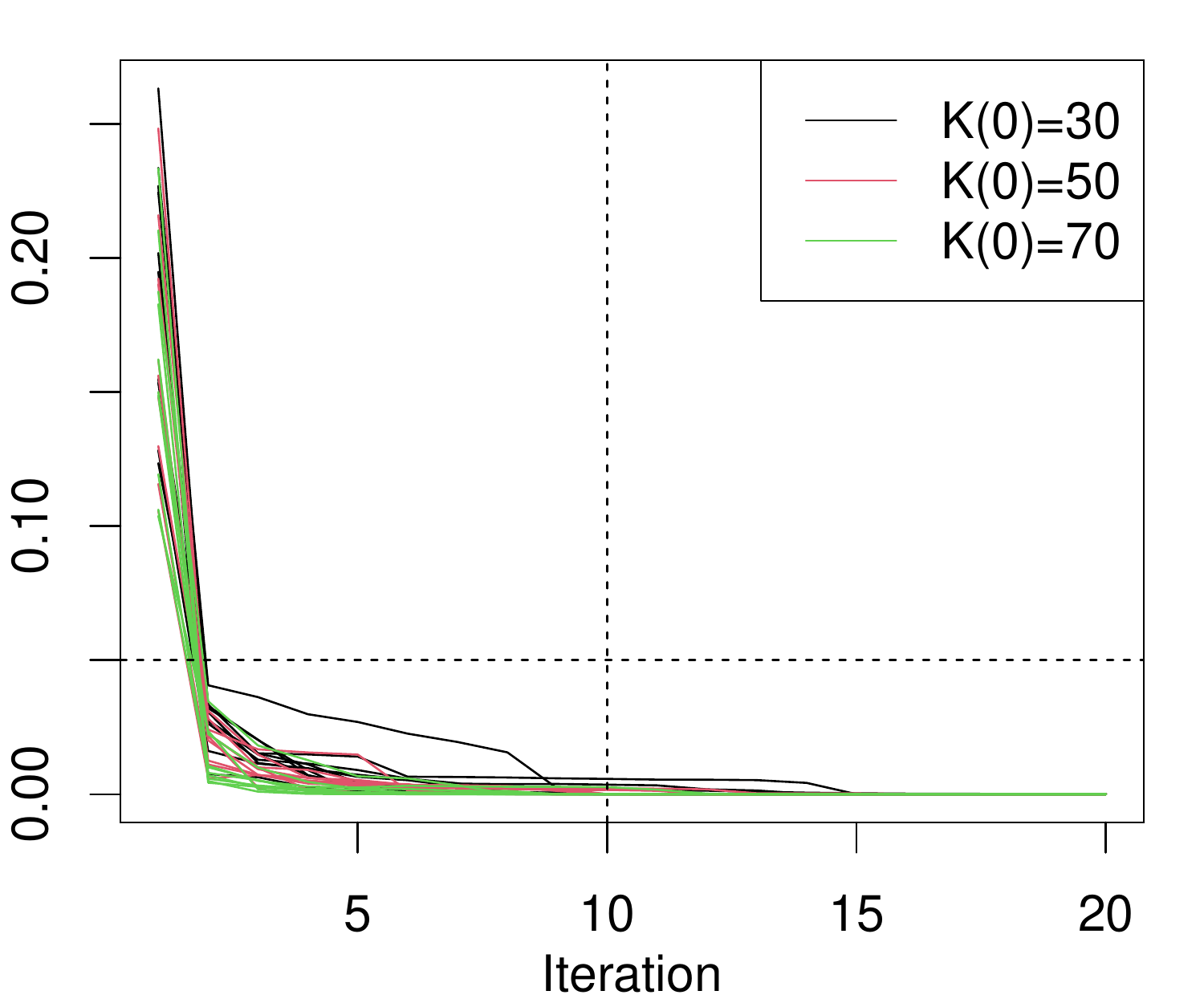}
\subcaption{Convergence of feature candidates: $\eta_2(t)=n^{-1}\sum_{i=1}^{n}\|\chi_i(t)-\chi_i(\infty))\|_2^2$.}
\end{minipage}
\begin{minipage}{0.33\textwidth}
\centering
\includegraphics[width=\textwidth]{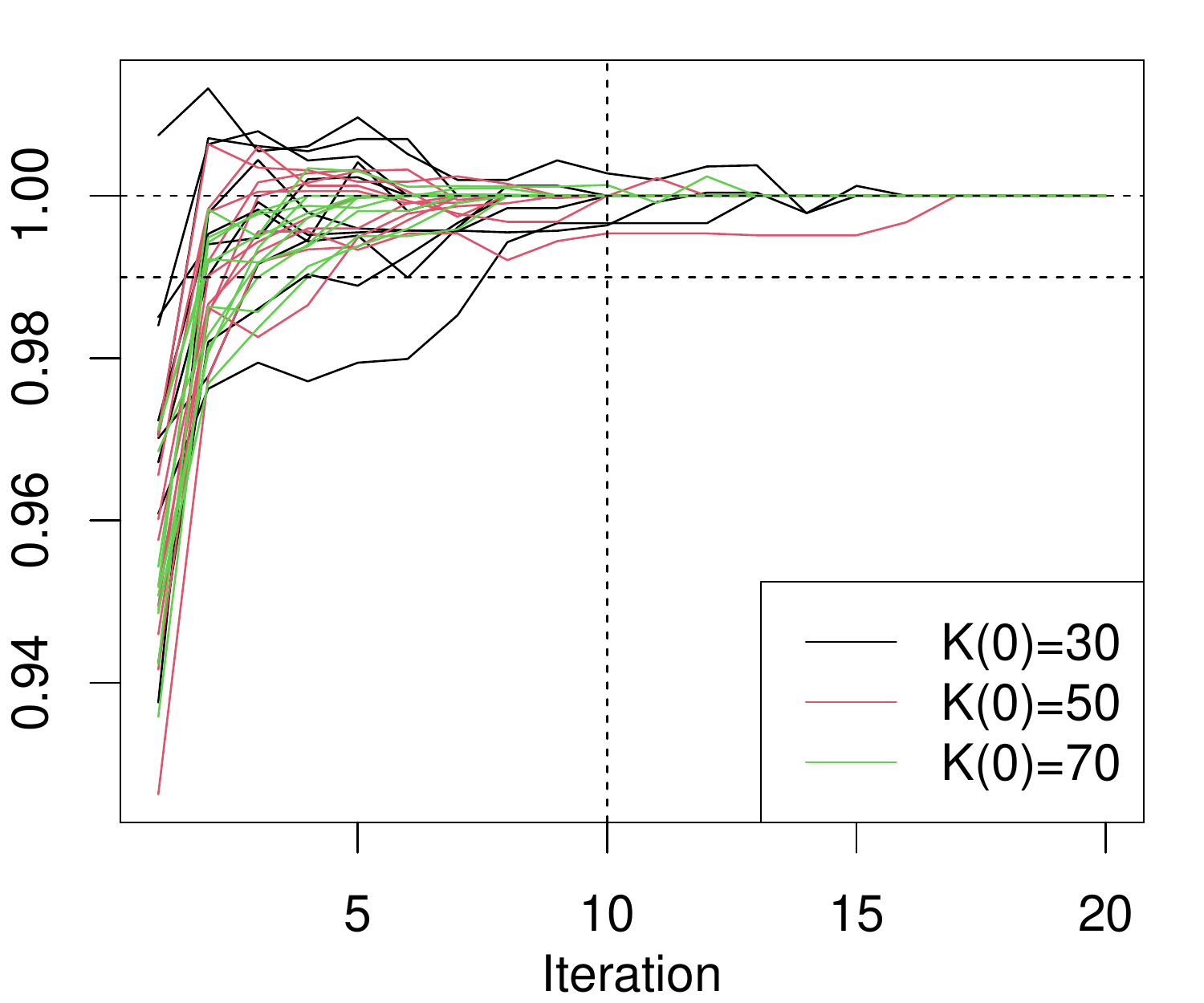}
\subcaption{Convergence of NMI: $\eta_3(t)=\text{NMI}(\hat{\bs c}(t),\bs c^*)/\text{NMI}(\hat{\bs c}(\infty),\bs c^*)$.}
\end{minipage}
\caption{Convergence, with $\lambda = 0.01$ and $K(0) \in \{30,50,70\}$.}
\label{fig:convergence}
\end{figure}

\paragraph{Results.} 
Although Experiments 1 and 2 show that almost $15$ iterations are required for the GOC algorithm to completely converge, both the cluster assignments and feature candidates almost converge within $\le 10$ iterations (specifically, the feature candidates almost converge within $\le 5$ iterations). NMI also nearly converges within $\le 5$ iterations, and only $5$--$10$ iterations are required for the GOC algorithm to obtain a sufficient performance during these experiments.

\subsubsection*{Experiment 4: Comparison to other clustering oracles.}

Table~\ref{table:comparison_to_other_oracles} computes the GOC algorithm and the corresponding baselines for other clustering oracles ($K$-medoids and the GMM using the \verb|ClusterR| and \verb|Mclust| packages). Note that the GMM~(\verb|Mclust|+BIC) selects the number of clusters using BIC (from $1,2,\ldots,K(t)$) in each clustering step.

\begin{table}[!ht]
\centering
\caption{Comparison to other clustering oracles with $\lambda=0.01$}
\label{table:comparison_to_other_oracles}
\subcaption{$K(0)=30$}
\begin{tabular}{llcccc} 
 \toprule 
 & & $K$-means & $K$-medoids & GMM~(\verb|ClusterR|) & GMM~(\verb|Mclust|+BIC) \\ 
 \midrule 
 \multirow{2}{*}{NMI} & GOC & $0.834\pm0.023$ & $0.848\pm0.028$ & $0.834\pm0.021$ & $0.802\pm0.028$ \\ 
 & Baseline  & $0.808\pm0.020$ & $0.816\pm0.028$ & $0.807\pm0.028$ & $0.781\pm0.025$ \\ 
 \cmidrule{2-6} 
\multirow{2}{*}{$F$-measure} & GOC & $0.641\pm0.026$ & $0.663\pm0.041$ & $0.635\pm0.033$ & $0.559\pm0.050$ \\ 
 & Baseline & $0.604\pm0.024$ & $0.626\pm0.038$ & $0.594\pm0.044$ & $0.528\pm0.046$ \\ 
 \cmidrule{2-6} 
$\#$clusters & GOC & $28.4\pm0.84$ & $28.7\pm0.95$ & $29.0\pm0.81$ & $22.8\pm3.39$ \\ 
 $\#$iterations & GOC & $15.8\pm3.74$ & $16.4\pm4.53$ & $18.8\pm3.33$ & $18.7\pm8.88$ \\ 
 \bottomrule 
 \end{tabular}
\subcaption{$K(0)=50$}
\begin{tabular}{llcccc} 
 \toprule 
 & & $K$-means & $K$-medoids & GMM~(\verb|ClusterR|) & GMM~(\verb|Mclust|+BIC) \\ 
 \midrule 
 \multirow{2}{*}{NMI} & GOC & $0.879\pm0.027$ & $0.879\pm0.024$ & $0.864\pm0.022$ & $0.807\pm0.035$ \\ 
 & Baseline  & $0.839\pm0.026$ & $0.841\pm0.026$ & $0.828\pm0.023$ & $0.786\pm0.032$ \\ 
 \cmidrule{2-6} 
\multirow{2}{*}{$F$-measure} & GOC & $0.752\pm0.046$ & $0.753\pm0.031$ & $0.718\pm0.039$ & $0.571\pm0.059$ \\ 
 & Baseline  & $0.685\pm0.048$ & $0.698\pm0.042$ & $0.654\pm0.04$ & $0.539\pm0.057$ \\ 
 \cmidrule{2-6} 
$\#$clusters & GOC & $46.4\pm0.97$ & $46.3\pm1.83$ & $48\pm1.05$ & $24\pm3.89$ \\ 
 $\#$iterations & GOC & $15\pm3.33$ & $16.2\pm2.94$ & $16.7\pm5.40$ & $18.3\pm9.11$ \\ 
 \bottomrule 
 \end{tabular}
\subcaption{$K(0)=70$}
\begin{tabular}{llcccc} 
 \toprule 
 & & $K$-means & $K$-medoids & GMM~(\verb|ClusterR|) & GMM~(\verb|Mclust|+BIC) \\ 
 \midrule 
 \multirow{2}{*}{NMI} & GOC & $0.878\pm0.022$ & $0.875\pm0.023$ & $0.868\pm0.023$ & $0.807\pm0.035$ \\ 
 & Baseline  & $0.837\pm0.021$ & $0.845\pm0.022$ & $0.839\pm0.022$ & $0.786\pm0.032$ \\ 
 \cmidrule{2-6} 
\multirow{2}{*}{$F$-measure} & GOC & $0.747\pm0.036$ & $0.743\pm0.037$ & $0.721\pm0.046$ & $0.571\pm0.059$ \\ 
 & Baseline  & $0.673\pm0.035$ & $0.692\pm0.039$ & $0.662\pm0.047$ & $0.539\pm0.057$ \\ 
 \cmidrule{2-6} 
$\#$clusters & GOC & $64.2\pm1.55$ & $63.9\pm2.03$ & $66.3\pm1.89$ & $24\pm3.89$ \\ 
 $\#$iterations & GOC & $12.2\pm2.10$ & $13.9\pm2.03$ & $16\pm5.48$ & $18.3\pm9.11$ \\ 
 \bottomrule 
 \end{tabular}
\end{table}

\paragraph{Results.} 
For all clustering oracles, the proposed GOC algorithm improves the clustering scores by simply applying the oracle to the representative vectors (baseline). 
In addition, $K$-means and $K$-medoids demonstrate almost the same performance for $K(0) = 50,70$, whereas $K$-means requires slightly fewer iterations to converge. 
We think that the vector $\hat{\mu}_k=(\ref{eq:mu_k})$ used to update the feature vectors is more compatible with $k$-means by minimizing the simple $\ell_2$-norm between the features and the cluster centers. 
GMM~(\verb|ClusterR|) demonstrates a similar performance as $K$-means and $K$-medoids, all of which detect almost the same number of clusters. 
GMM~(\verb|MClust|+BIC) detects fewer clusters than $K$-means, $K$-medoids, and GMM~(\verb|ClusterR|). 
GMM~(\verb|MClust|+BIC) also achieves lower scores, whereas GMM~(\verb|Mclust|+BIC) tends to detect the same cluster assignments and feature representative during the first iteration, regardless of $K(0)$ (because it selects the number of clusters using BIC). 
Note that the number of underlying true clusters is $K_*=50$, i.e., BIC applied to GMM~(\verb|Mclust|) underestimates the number of clusters. 
See Appendix~\ref{sec:convergence} for the convergence experiments conducted on $K$-medoids, GMM~(\verb|ClusterR|), and GMM~(\verb|Mclust|+BIC).

\paragraph{Notes on Experiment~4.} 
While we employed the EII model~($\Sigma_k=\sigma^2 I$) for $\verb|Mclust|$, we also conducted experiments on the \verb|Mclust| function with more general models, i.e., VII~($\Sigma_k=\sigma_k^2 I$ for $\sigma_k>0$) and VVV~(where $\Sigma_k \in \mathbb{R}^{d \times d}$ can be arbitrary): VVV generalizes VII, and VII generalizes EII. 
Under the setting $K(0)=50,\lambda=0.01$, NMI scores for the GOC algorithm using GMM~(\verb|Mclust|(VII)+BIC) and GMM~(\verb|Mclust|(VVV)+BIC) are $0.717 \pm 0.050$ and $0.662 \pm 0.055$, respectively, whereas the detected numbers of clusters are $13.0 \pm 3.97$ and $9.1 \pm 2.38$. Therefore, the scores are in the order of EII$ > $VII$ > $VVV, which is opposite the model degrees of freedom, which are in the order of EII$ < $VII$ < $VVV.

\subsection{Discussion 1: Similarity-based Clustering}
\label{subsec:discussion}

As another way to exploit the feature uncertainty, \citet{Kriegel2008density} and \citet{Jiang2013clsutering} define discrepancies between the probability densities $p_i,p_j$ of features $X_i,X_j$ and apply similarity-based clustering algorithms. 
However, we cannot employ this approach in our setting, as the explicit forms of the densities $p_i,p_j$ are hardly obtained due to the non-linear pre-processing. 
Therefore, as an alternative implementation, we measure the discrepancy between the feature uncertainty sets, and apply affinity propagation~\citep[AP;][]{frey2007clustering} which takes the similarity matrix as its input and outputs the estimated clusters. 

To compute the similarity matrix, for $1 \le i,j \le n$,
we employ the negative sign of the following three types of discrepancies between $\tilde{\setX}_i^{(m_i)},\tilde{\setX}_j^{(m_j)}$:
\begin{align*}
    s^{(1)}_{ij}
    &:=
    \frac{1}{m_i m_j}
    \sum_{\chi \in \tilde{\setX}_i^{(m_i)}}
    \sum_{\chi' \in \tilde{\setX}_j^{(m_j)}}
    \|\chi - \chi'\|_2, \\
    s^{(2)}_{ij}
    &:=
    \min_{\chi \in \tilde{\setX}_i^{(m_i)}}
    \min_{\chi' \in \tilde{\setX}_j^{(m_j)}}
    \|\chi - \chi'\|_2, \\
    s^{(3)}_{ij}
    &:=
    \max\left\{
    \min_{\chi \in \tilde{\setX}_i^{(m_i)}}
    \max_{\chi' \in \tilde{\setX}_j^{(m_j)}}
    \|\chi - \chi'\|_2
    \,
    ,
    \,
    \min_{\chi' \in \tilde{\setX}_j^{(m_j)}}
    \max_{\chi \in \tilde{\setX}_i^{(m_i)}}
    \|\chi - \chi'\|_2
    \right\}.
\end{align*}
The last one is known as the Hausdorff distance.

The AP was implemented using the \verb|apcluster| package in \verb|R|. 
Therein, exemplar preferences are set to the sample quantile of the input dissimilarities with a threshold $q$; in addition, we employ $q = 0.5,0.7,0.9$. 
Table~\ref{table:comparison_to_AP} shows the NMI and $F$-measure for the baseline ($K$-means applied to the representative vectors~(\ref{eq:representative_vector})), and the AP equipped with negative signs of $S^{(1)}=(s^{(1)}_{ij}),S^{(2)}=(s^{(2)}_{ij}),S^{(3)}=(s^{(3)}_{ij})$.  
For the baseline and GOC algorithm, we set $K^{(0)} = 50$ and $\lambda = 0.01$, respectively.

\begin{table}[!ht]
\centering 
\caption{Comparison to affinity propagation}
\label{table:comparison_to_AP}
\begin{tabular}{llccc}
\toprule 
& & NMI & $F$-measure & $\#$clusters \\
\midrule
\multicolumn{2}{c}{GOC} & $0.879 \pm 0.027$ & $0.752 \pm 0.046$ & $46.4 \pm 0.97$ \\ 
\multicolumn{2}{c}{Baseline} & $0.839 \pm 0.026$ & $0.685 \pm 0.048$ & $50 \pm 0$ \\
\hline
\multirow{3}{*}{AP~($S^{(1)}$)} & $q=0.5$ & $0.801 \pm 0.022$ & $0.566 \pm 0.035$ & $23.5 \pm 2.80$ \\
& $q=0.7$ & $0.825 \pm 0.022$ & $0.623 \pm 0.035$ & $30.6 \pm 2.95$ \\
& $q=0.9$ & $0.848 \pm 0.024$ & $0.680 \pm 0.046$ & $55.0 \pm 4.83$ \\
\hline
\multirow{3}{*}{AP~($S^{(2)}$)} & $q=0.5$ & $0.751 \pm 0.048$ & $0.488 \pm 0.074$ &  $17.0 \pm 3.50$ \\
& $q=0.7$ & $0.793 \pm 0.029$ & $0.558 \pm 0.052$ & $20.7 \pm 3.20$ \\
& $q=0.9$ & $0.863 \pm 0.026$ & $0.717 \pm 0.048$ & $33.0 \pm 2.71$ \\
\hline
\multirow{3}{*}{AP~($S^{(3)}$)} & $q=0.5$ & $0.745 \pm 0.030$ & $0.500 \pm 0.047$ & $27.6 \pm 5.08$ \\
& $q=0.7$ & $0.772 \pm 0.025$ & $0.553 \pm 0.039$ &  $38.7 \pm 6.11$ \\
& $q=0.9$ & $0.816 \pm 0.016$ & $0.617 \pm 0.028$ & $74.9 \pm 7.48$ \\
\bottomrule
\end{tabular}
\end{table}

\paragraph{Results.} 
Although the GOC algorithm outperforms all AP clustering scores, the AP with $q = 0.9$ demonstrates a competitive performance (although the AP with a larger $q$ of greater than $0.9$ unfortunately tends to become unstable). 
The AP equipped with $S^{(2)}$ and $q = 0.9$ demonstrates good clustering scores. In fact, the discrepancy $s^{(2)}_{ij}$ does not satisfy the triangle inequality (namely, $s^{(2)}_{ij}$ is not the distance between $\tilde{\setX}_i^{(m_i)}$ and $\tilde{\setX}_j^{(m_j)}$), unlike $s^{(1)}_{ij}$ and $s^{(3)}_{ij}$. 
Therefore, for several specific situations, as an alternative to the proposed GOC algorithm, we admit the potential significance of distance-based clustering approaches endowed with some specific dissimilarities (that are not restricted to satisfying the definitions of distance ). 

Finally, we note that the complexity when computing the distances between pairs of ambiguity sets is large. 
The complexity is $O(n^2 m^2)$ when assuming that $m_1=m_2=\cdots=m_n=m$, whereas the GOC algorithm roughly requires $O(Tn^2)$ with $T$~iterations (in our numerical experiments, $T \approx 10$ and $m \approx 10^2$, whereby $m^2 \approx 10^4=10^3T$). 
Computing the similarities of the ambiguity sets is rather burdensome if $n$ and $m$ increase.

\subsection{Discussion 2: Convex Uncertainty Sets}
\label{subsec:discussion2_convex}

While we employ the empirical uncertainty set $\tilde{\setX}_i^{(m_i)}$ which is not restricted to be convex, we may consider an alternative convex set 
$\tilde{T}_i^{(m_i)}$ containing $\tilde{\setX}_i^{(m_i)}$: finding possible feature candidates over the set
\[
    \setA_n^{\dagger}
    :=
    \tilde{T}_1^{(m_1)} \times \tilde{T}_2^{(m_2)} \times \cdots \times \tilde{T}_n^{(m_n)}
\]
instead of $\setA_n$, is expected to be more efficiently computed by the existing optimization techniques related to convex sets. 
For computational reasons, \citet{Ngai2006efficient} considers a minimum box $B_i$ containing $\tilde{\setX}_i^{(m_i)}$ as the convex set $\tilde{T}_i^{(m_i)}$, and \citet{vo2016robust} assumes that the uncertainty set is box-shaped (i.e., convex).

Referring to \citet{vo2016robust}, we may employ difference-of-convex algorithm~\citep[DCA; see, e.g.,][]{Thi2005DC} to solve a specific form of GOC (particularly, GOC equipped with $K$-means) more efficiently. 
We think that this convex modification of GOC would be a future research worth considering, while we do not employ this convex set $\tilde{T}_i^{(m_i)}$ in this study by the following reasons: 
(i) to exploit the convex techniques, we need to heavily restrict the types of clustering oracle $\clustering$ (whereby the applicability of GOC would be much degraded, and the implementation would be mathematically difficult for users), and 
(ii) the convex set $\tilde{T}_i^{(m_i)}$ may contain large unnecessary regions in some situations (see Figure~\ref{fig:convex}).

\begin{figure}[!ht]
\centering
\begin{minipage}{0.45\textwidth}
\centering
    \includegraphics[width=0.5\textwidth]{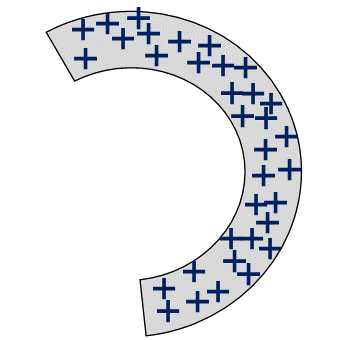}
    \subcaption{Underlying uncertianty set $\setX_i$ (colored in grey)}    
\end{minipage} 
\begin{minipage}{0.45\textwidth}
\centering
    \includegraphics[width=0.5\textwidth]{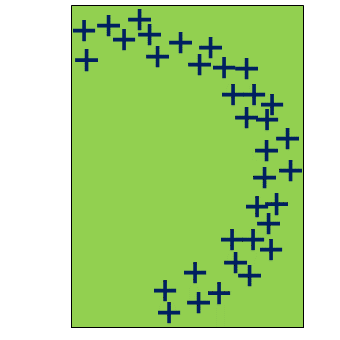}
    \subcaption{Smallest box $T_i$ containing $\tilde{\setX}_i^{(m_i)}$}    
\end{minipage} 
\caption{Compared to the underlying uncertainty set $\setX_i$ (where the empirical uncertainty set $\tilde{\setX}_i^{(m_i)}$ is shown by the ``+'' symbols), the smallest box $T_i$ containing $\tilde{\setX}_i^{(m_i)}$ includes a large unnecessary regions.}
\label{fig:convex}
\end{figure}

\section{Conclusion}
This study considered a clustering problem using user-specified uncertainty for the covariates. 
In particular, we considered a pre-processing that applies a non-linear transformation to the covariates to obtain features that are expected to capture the latent data structure. 
In addition, we proposed the GOC algorithm, which greedily finds better feature candidates over the uncertainty sets (of the pre-processed features). 
We applied the GOC algorithm to a synthetic orbital action dataset of sibling stars generated through our numerical simulation, for which the proposed algorithm improved the clustering scores. 
We also provided realistic datasets and source codes to reproduce the experimental results in \url{https://github.com/oknakfm/GOC}. 

Finally, we describe below the limitations and possible extensions of this study. 

\paragraph{Limitations.} We must specify $K(0)$ and $\lambda \ge 0$ when applying the GOC algorithm. 
Specifying the number of clusters is a common and historical problem in computational statistics (for instance, see \citet{Thorndike53whobelongs} for the elbow method, and inexhaustible discussions have been developed for several decades). 
Although we may simply apply the existing approaches to each clustering step, the number of clusters detected depends on the feature candidate at a particular step under our problem setting, and it remains unclear whether the conventional approaches are still effective. 
The BIC used in the \verb|Mclust| implementation of the GMM underestimates the number of clusters, as shown through Experiment~4 described in Section~\ref{subsec:experimental_results}. Because clustering is an unsupervised problem, regarding the selection of hyperparameter $\lambda \ge 0$, we cannot employ standard statistical approaches such as a cross-validation. 
Although our experiment results show the adequacy of using a small $\lambda$ (even a $\lambda$ of 0 is effective with our datasets), we have yet to sufficiently confirm this.

\paragraph{Possible extensions of this research.} 
A possible extension of this study would be to accelerate the speed of the GOC algorithm. 
Because most of the clustering algorithms considered in this study~(i.e., $K$-means, $K$-medoids, and GMM) use iterative algorithms, we can terminate the iterations in each clustering oracle before convergence is reached. 
Namely, we can reduce the number of unnecessary iterations (within each clustering step) and focus more on the convergence in the sense of the overall GOC. 
Another possible extension of this research is to incorporate hierarchical clustering into the GOC algorithm. Because the iterations of the algorithm easily break down the hierarchical structure found during the clustering step, some modifications of the algorithm are needed to obtain the hierarchical structure of the GOC output.

\section*{Acknowledgement}
AO was supported by JSPS KAKENHI (Grant No. JP21K17718) and JST CREST (Grant No. JPMJCR21N3). 
KH was also supported by JSPS KAKENHI (Grant Nos. JP21K13965 and JP21H00053).
We would like to thank Keisuke Yano for helpful discussion

\appendix

\section{Detailed Descriptions of Synthetic Dataset}
\label{subsec:detailed_descriptions_of_synthetic_dataset}

\subsection{General Description of the Simulation}

In galactic astronomy, it is believed that the Milky Way was formed through the merging of smaller systems, such as dwarf galaxies. 
In the numerical experiments described in Section \ref{subsec:synthetic_dataset_generation}, we generated mock data by simulating the formation process of the Milky Way. 
To simplify this case, we assume that $K_* = 50$ dwarf galaxies merge with the Milky Way and are instantaneously disrupted at time $\tau = 0$. 
Each dwarf galaxy contains 30,000 sibling stars. 
When a dwarf galaxy is disrupted, 
sibling stars begin moving independently. 
After $\tau = 0$, and until the current epoch ($\tau = 10 \times 10^{9}$ years), 
the motions of these sibling stars are treated as test particles (i.e., particles with zero mass) moving within the gravitational potential of the Milky Way. 
For each dwarf galaxy, the positions and velocities of the sibling stars at $\tau = 0$ (i.e., the initial conditions) slightly differ from each other. 
The small difference in the initial conditions evolves over cosmic time, and the positions and velocities of the sibling stars are completely different from each other in the current epoch, although they originate from the same dwarf galaxy. 

\subsection{Visualization of the Simulation}
\label{subsec:visualization_of_the_simulation}

To provide an intuitive understanding of the simulation,  Figure~\ref{fig:KH_visualize_simulation} shows a subset of sibling stars in two dwarf galaxies A and B that merge with the Milky Way at $\tau = 0$. 
At $\tau = 0$, the sibling stars in each dwarf galaxy have identical positions and slightly different velocities, making these two groups clearly distinguishable in terms of their positions and velocities. 
At $\tau = 3 \times 10^{9}$ years, the positions and velocities of the sibling stars exhibit a wider distribution, and the two groups of stars are marginally distinguishable in terms of their positions and velocities. 
At the current epoch, $\tau = 10 \times 10^{9}$ years, the positions and velocities of the sibling stars show a mixed distribution. 
At this point, it is difficult to separate two groups of stars from each other in terms of their positions and velocities.

From these three snapshots, it is evident that finding sibling stars within a six-dimensional position and velocity space becomes more difficult as the system evolves over time. 
Importantly, this difficulty is unrelated to the accuracy of the data. 
Even if we have a perfect measurement of the positions and velocities of the stars, finding sibling stars is a difficult task if we use the raw data of position and velocity.

However, the case appears to be simpler if we look at the system in a three-dimensional phase space spanned by the orbital action. 
(As a reminder of the readers, the orbital action is a three-dimensional conserved quantity, which is a function of position and velocity; and it describes the stellar orbital properties.)
Because the orbital action is conserved for each star, the distribution of sibling stars in the orbital action space is also conserved over time, as shown in the rightmost panels in Figure~\ref{fig:KH_visualize_simulation}. 
Therefore, using the action distribution instead of the position and velocity distributions is an indispensable strategy for identifying sibling stars. 
As shown in Example \ref{ex:stars}, in the presence of observational uncertainties in the stellar positions and velocities, it is difficult to find sibling stars in the action space, 
which motivated us to introduce the GOC algorithm as a new type of clustering approach. 

\begin{figure}[!ht]
\centering
\includegraphics[width=0.9\textwidth]{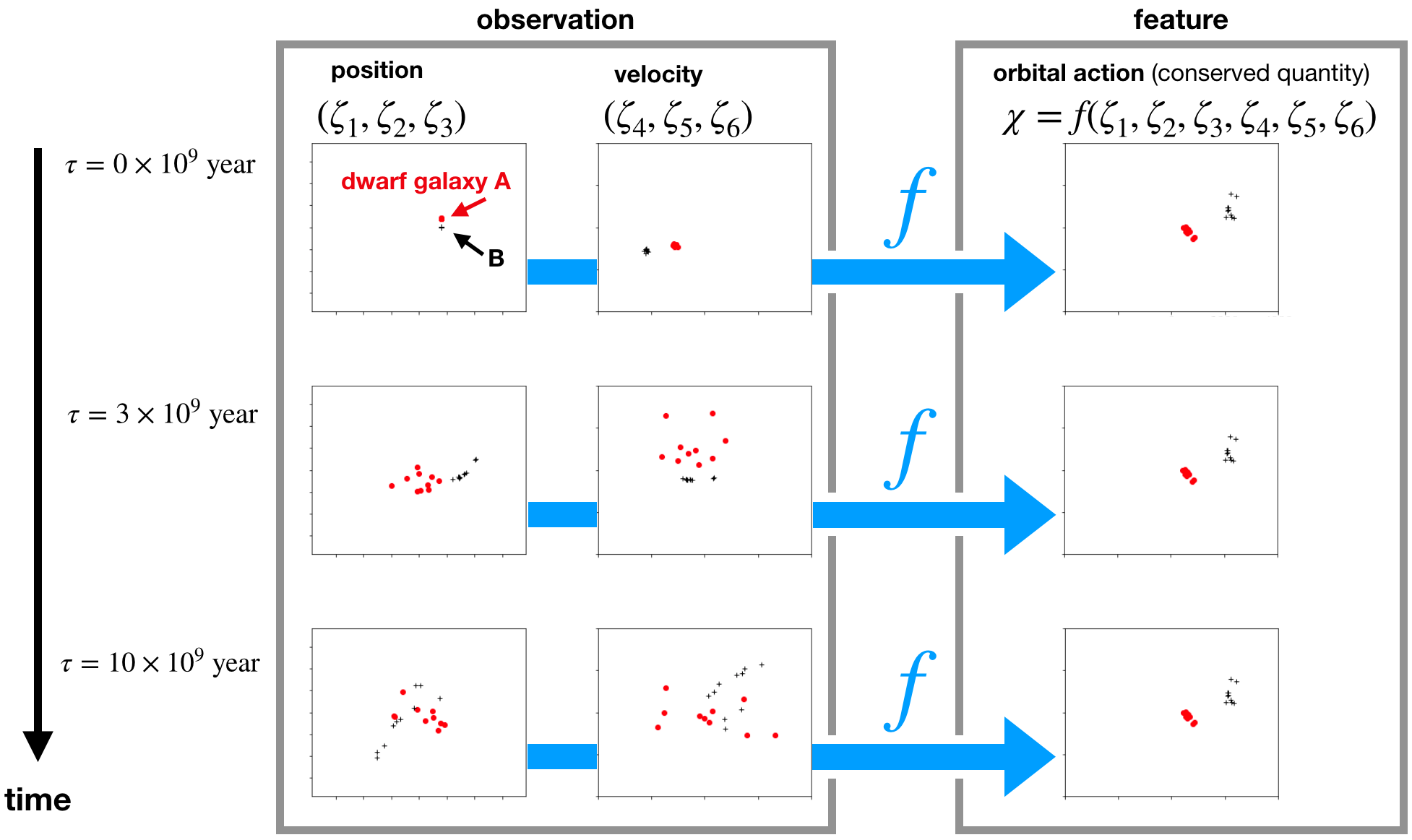}\\
\caption{Disruption of two dwarf galaxies in the Milky Way. 
Even in the absence of observational uncertainty, 
finding sibling stars from a six-dimensional position and the velocity space is difficult at the current epoch ($\tau = 10\times10^{9}$ years).
By contrast, finding sibling stars in a three-dimensional orbital action space is easier 
because the orbital actions are conserved over time. 
Note that, for clarity, this figure only shows two-dimensional projections of the three-dimensional position, velocity, and orbital action. 
}
\label{fig:KH_visualize_simulation}
\end{figure}

\subsection{Detailed Implementations of the Simulation}

To generate mock data, we first randomly generated $K_* = 50$ centroids with positions $\vec{x}^\mathrm{centroid}_k$ and velocities $\vec{v}^\mathrm{centroid}_k$ ($k=1,\cdots,K_*$) using a realistic distribution function model of the Milky Way, similar to that described in \cite{Hattori2021MNRAS.508.5468H}. 
The $k$th centroid corresponds to the position and velocity of the $k$th dwarf galaxy at $\tau = 0$. 
For the $k$th dwarf galaxy, we generated $N_\mathrm{sibling}=30000$ positions and velocities, $\vec{x}_{ks}(\tau=0)$ and $\vec{v}_{ks}(\tau=0)$ ($s=1, \cdots, N_\mathrm{sibling}$), 
such that 
$\vec{x}_{ks}(\tau=0) = \vec{x}^\mathrm{centroid}_k$ and 
$\vec{v}_{ks}(\tau=0) \sim \normaldist(\vec{v}^\mathrm{centroid}_k, \Sigma)$ with 
$\Sigma = (5 \;\mathrm{km\;s^{-1}})^2 I$. 
Here, $I \in \mathbb{R}^{3 \times 3}$ denotes the identity matrix. 
These positions and velocities correspond to the initial conditions of the sibling stars at $\tau = 0$. 
From these initial conditions, we integrated the orbits of $K_{*} N_\mathrm{sibling}$ stars for $10\times 10^{9}$ years (which is approximately the age of the universe) under a widely used gravitational potential model of the Milky Way described in \cite{McMillan2017} and derived the current-day positions and velocities. 
At this point, $N_\mathrm{sibling}$ stars originating from the same dwarf galaxy are no longer located close to each other (see the bottom-left panel in Figure~\ref{fig:KH_visualize_simulation}). 
To mimic the observations, for the $k$th group, we randomly select $n_k$ stars that are close to the current position of the Sun. 
(Note that stars that are too far away from the Sun are too faint to be observed.) 
The assumed position and velocity of the Sun is the same as those in \cite{Doke2022arXiv220315481D}. 
We chose $n_k = 1 + \{(k-1) \text{ mod } 10\}$, 
where ``$x \text{ mod } a$'' denotes the residual of the division ($x$ divided by $a$). 
With this, 
we have $n = 275$ stars in total, 
such that we have 1 member star for $k = 1,11,21,31,41$; 
we have 2 member stars for $k = 2,12,22,32,42$; and so on.

For completeness, in the following, 
we briefly mention how we converted the simulated data 
into the uncertainty set used by the GOC algorithm. 
(See Section \ref{subsec:synthetic_dataset_generation} for a full description.)
First, we converted 
the simulated stellar positions and velocities of $n = 275$ stars 
into observable quantities, 
as illustrated schematically in Figure~\ref{fig:KH1}(\subref{fig:zeta123456}). 
Note that 
the stellar positions and velocities in the simulation are 
{\it true} quantities that are unavailable in reality. 
To mimic the actual observation, 
we add a random error to the observable quantities, 
which are then used to construct the uncertainty set.

By following the same procedure, we run 10 independent simulations.  
Each simulation is used to construct a dataset.

\section{Convergence of GOC Using $K$-Medoids and GMM}
\label{sec:convergence}

Regarding the comparison of the convergence of the $K$-means clustering shown through Experiment~3 described in Section~\ref{subsec:experimental_results}, Figures~\ref{fig:convergence_k-medoids}--\ref{fig:convergence_GMM_EII} show the convergence of $K$-medoids, GMM~(\verb|ClusterR|), and GMM~(\verb|Mclust|+BIC), respectively, all of which demonstrated the same tendencies.

\begin{figure}[!ht]
\centering
\begin{minipage}{0.33\textwidth}
\centering
\includegraphics[width=\textwidth]{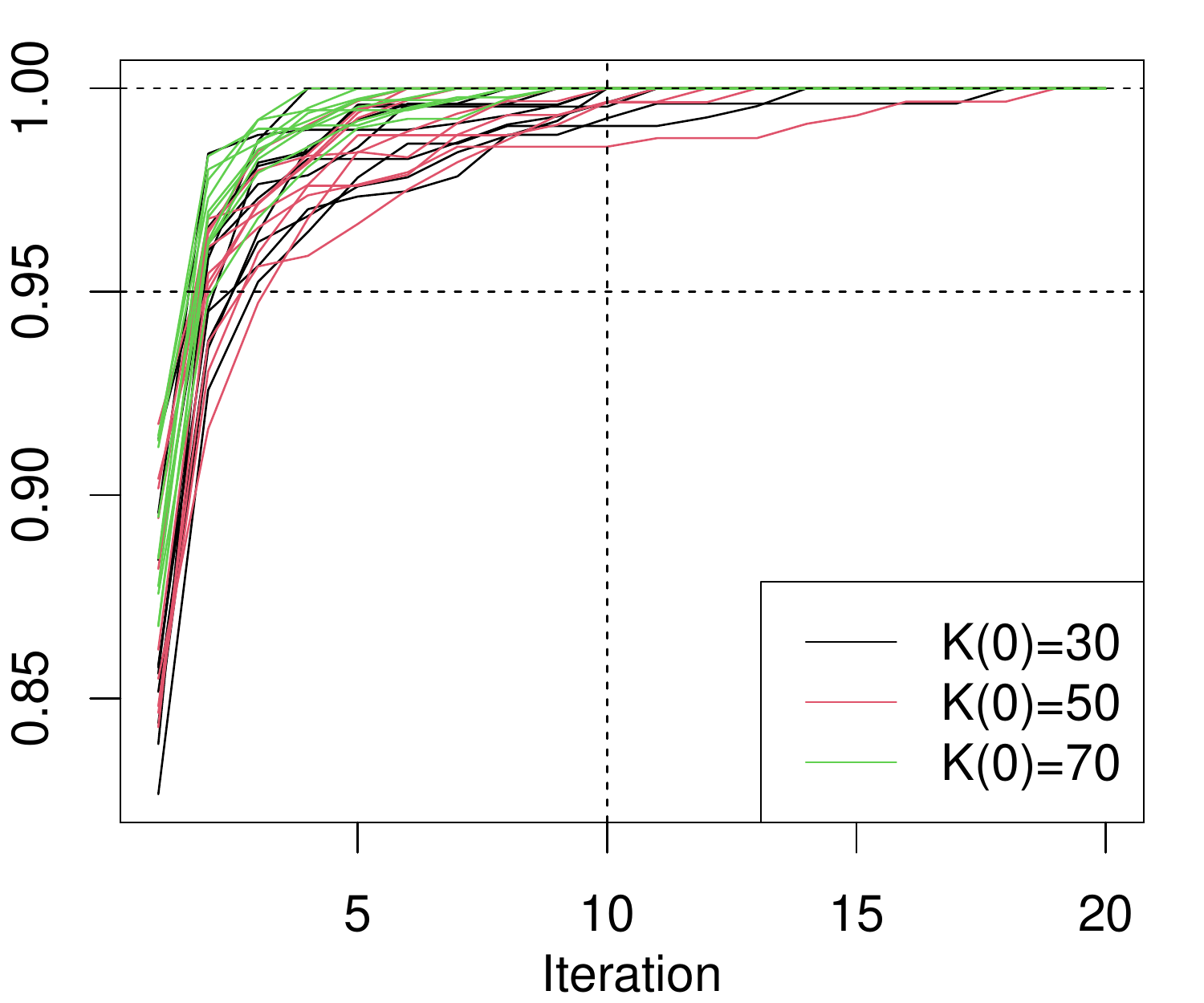}
\subcaption{$\text{NMI}(\hat{\bs c}(t),\hat{\bs c}(\infty))$.}
\end{minipage}
\begin{minipage}{0.33\textwidth}
\centering
\includegraphics[width=\textwidth]{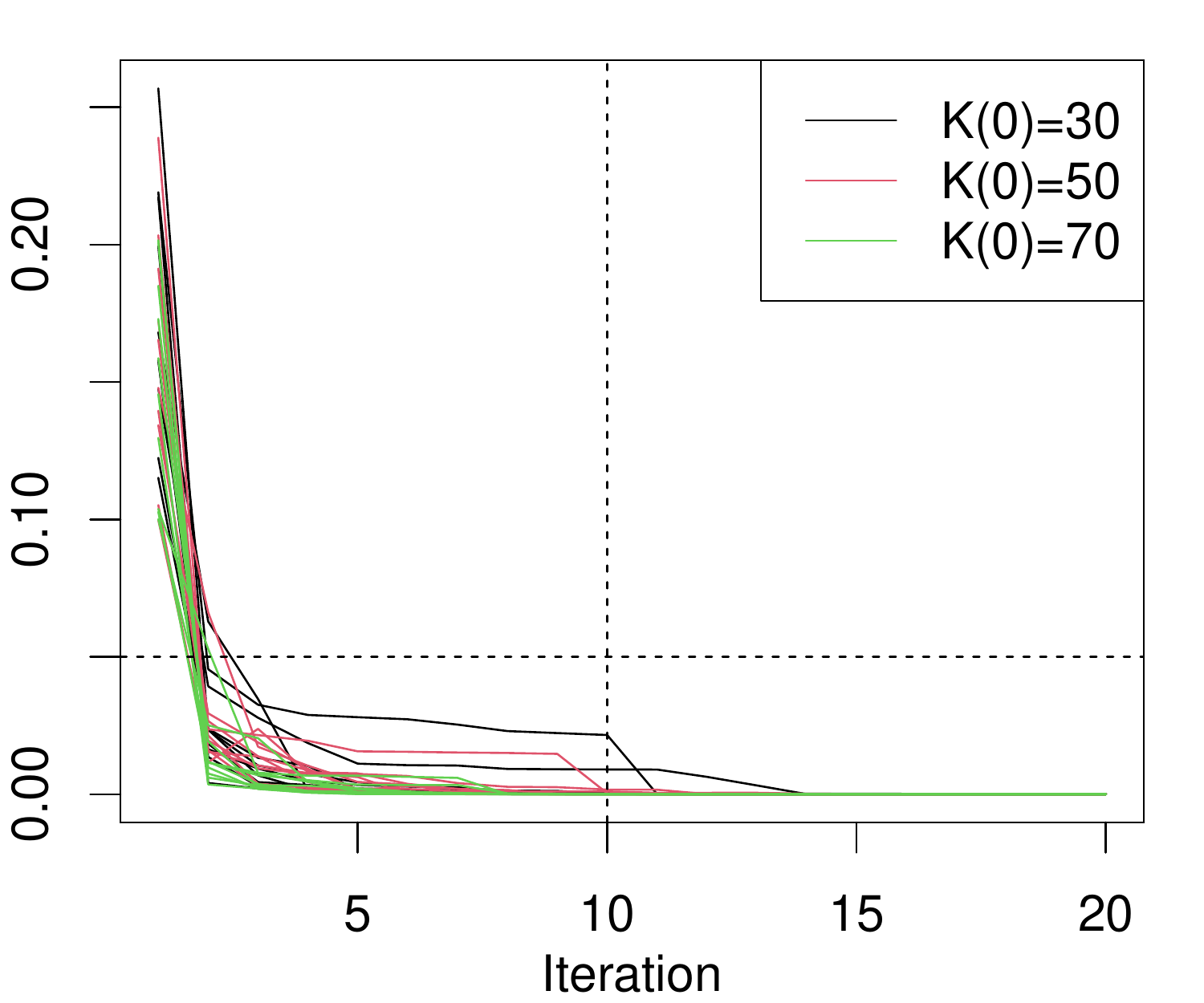}
\subcaption{$n^{-1}\sum_{i=1}^{n}\|\chi_i(t)-\chi_i(\infty))\|_2^2$.}
\end{minipage}
\begin{minipage}{0.33\textwidth}
\centering
\includegraphics[width=\textwidth]{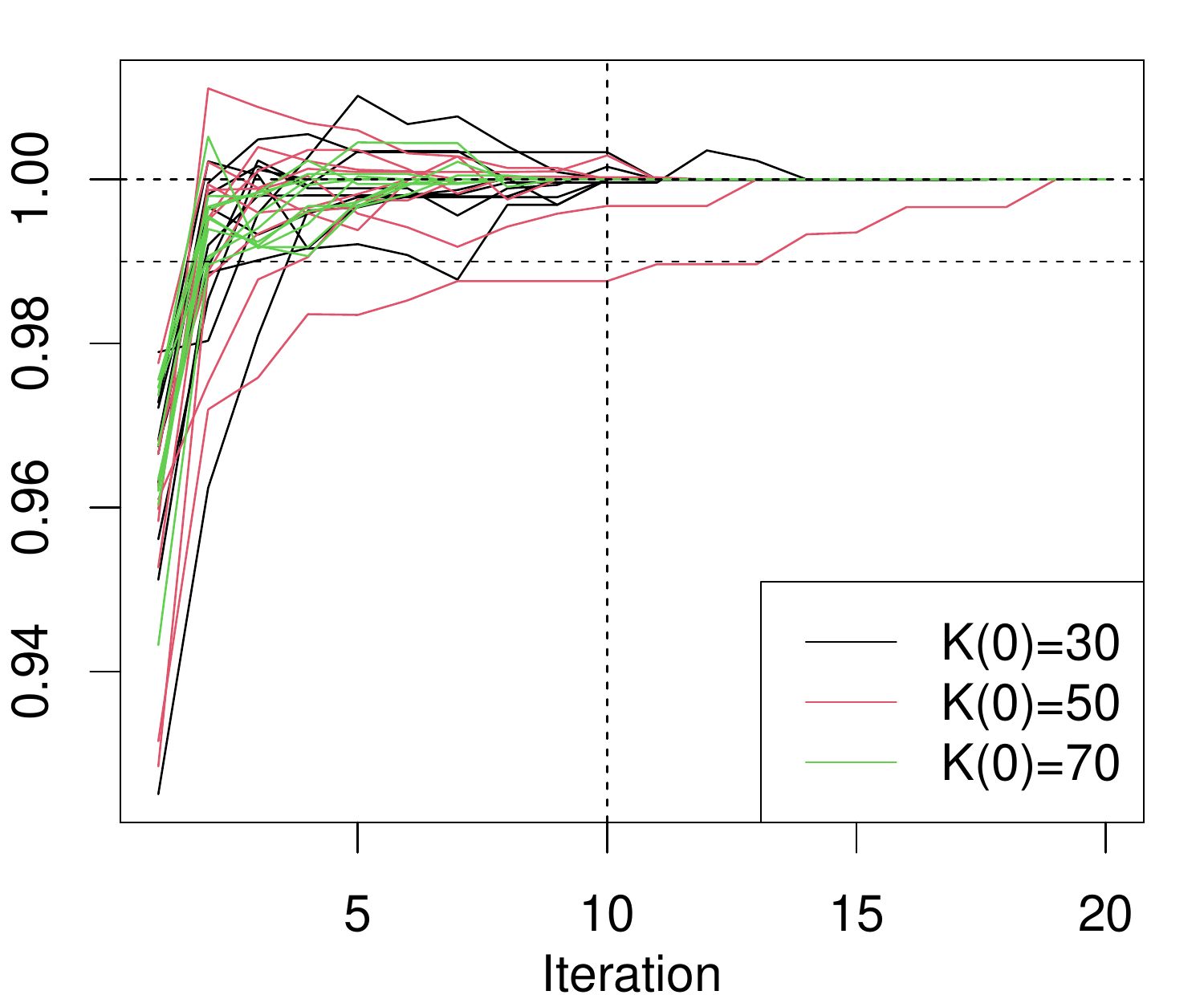}
\subcaption{$\text{NMI}(\hat{\bs c}(t),\bs c^*)/\text{NMI}(\hat{\bs c}(\infty),\bs c^*)$.}
\end{minipage}
\caption{Convergence of $K$-medoids.}
\label{fig:convergence_k-medoids}
\begin{minipage}{0.33\textwidth}
\centering
\includegraphics[width=\textwidth]{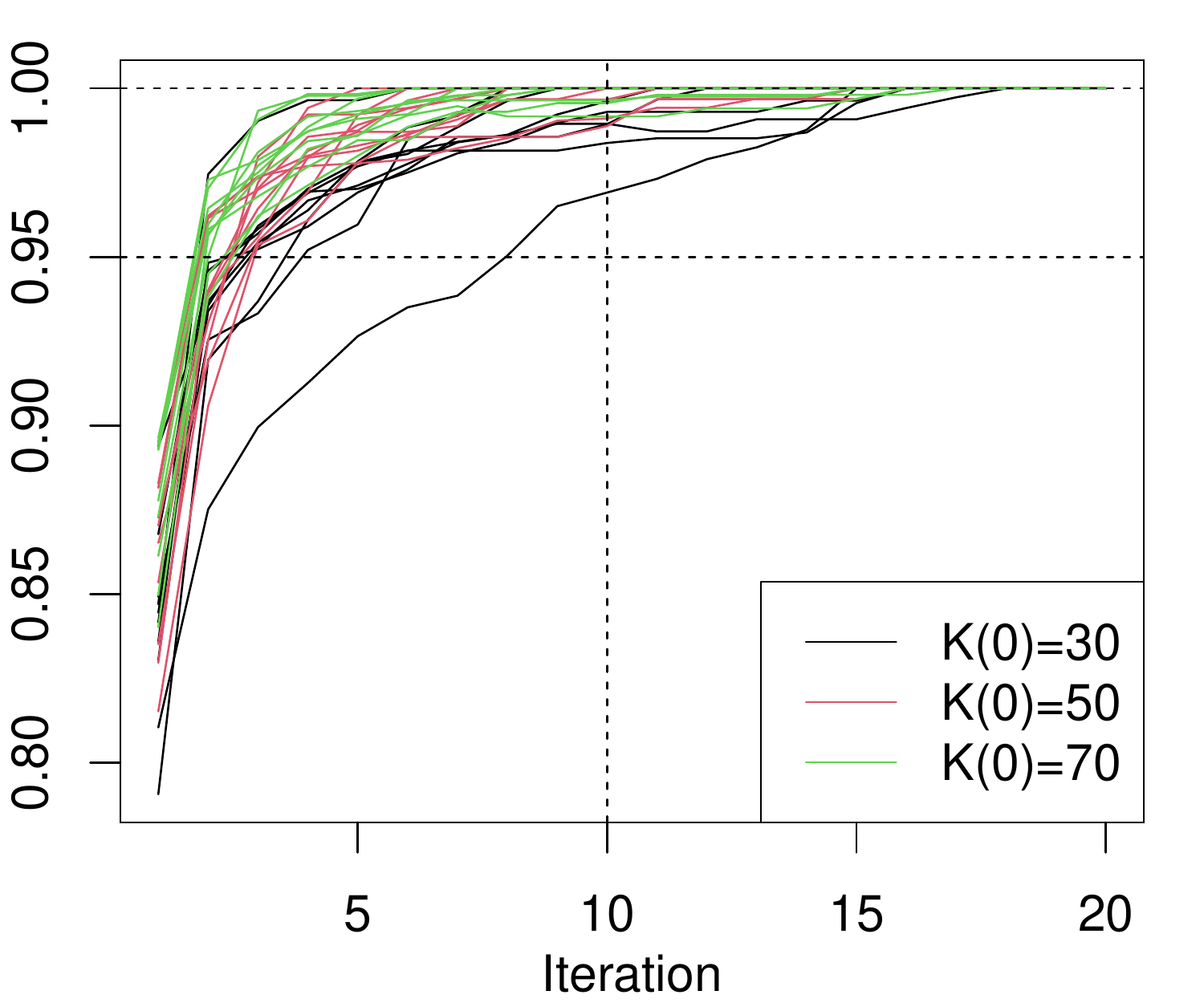}
\subcaption{$\text{NMI}(\hat{\bs c}(t),\hat{\bs c}(\infty))$.}
\end{minipage}
\begin{minipage}{0.33\textwidth}
\centering
\includegraphics[width=\textwidth]{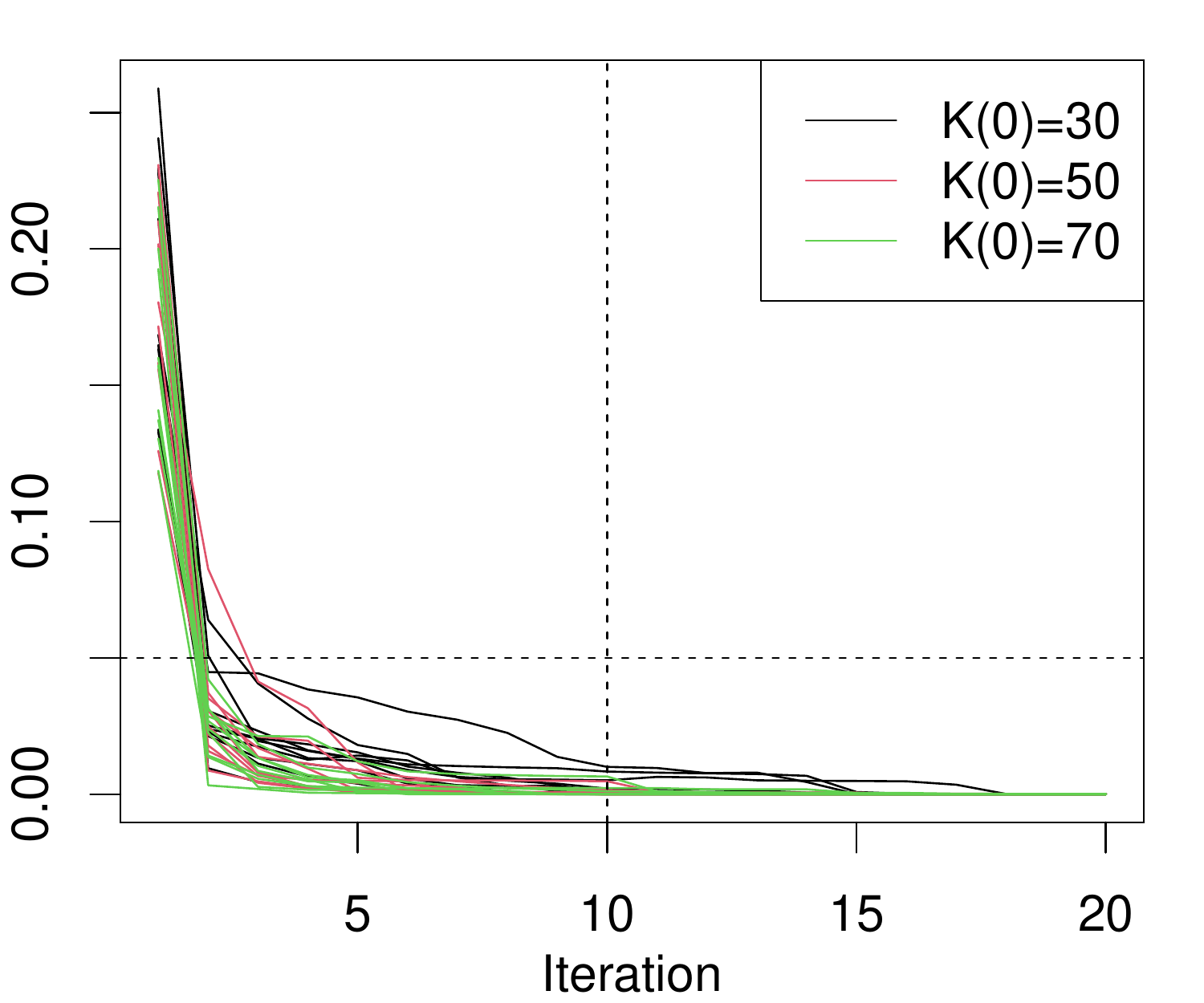}
\subcaption{$n^{-1}\sum_{i=1}^{n}\|\chi_i(t)-\chi_i(\infty))\|_2^2$.}
\end{minipage}
\begin{minipage}{0.33\textwidth}
\centering
\includegraphics[width=\textwidth]{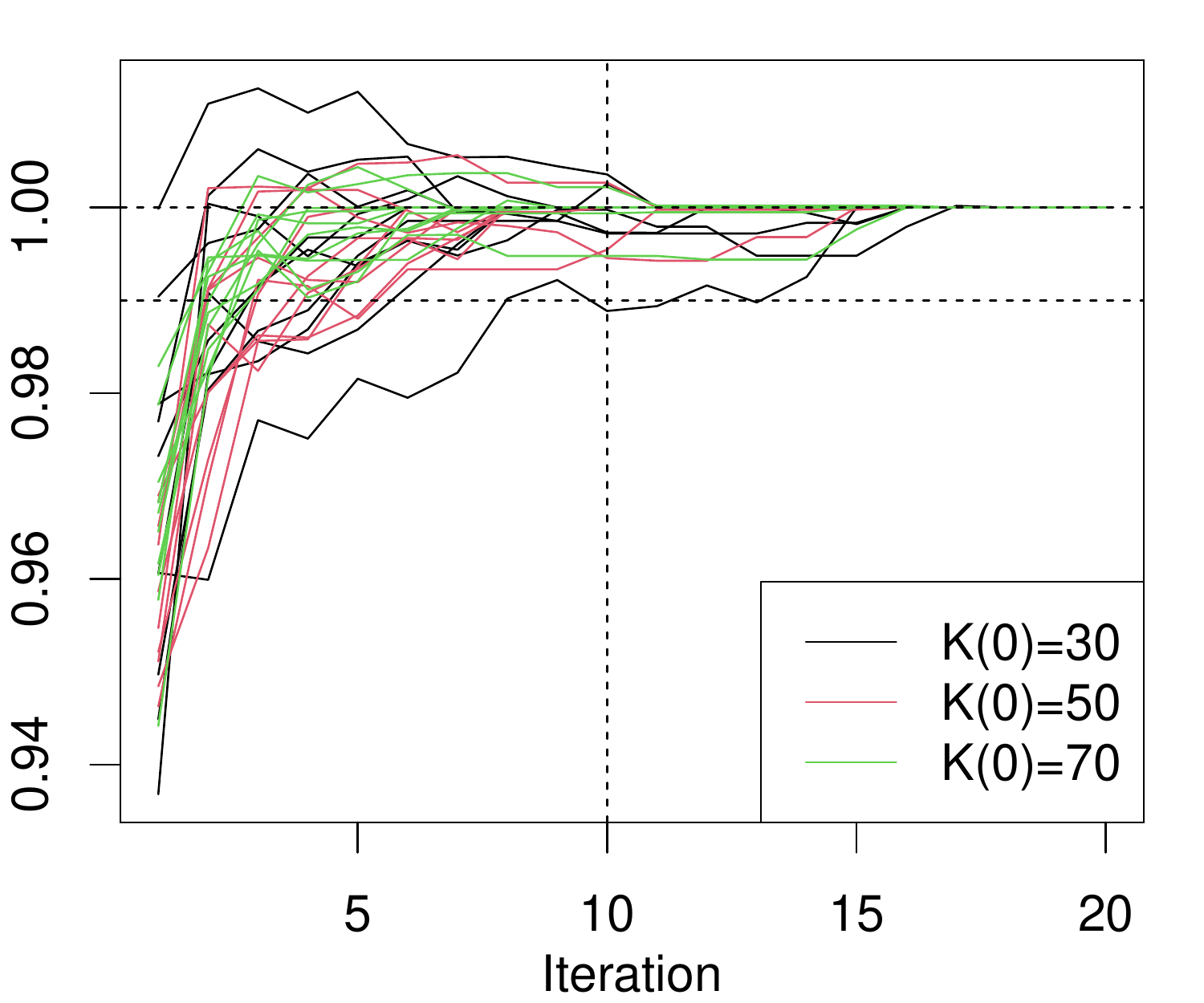}
\subcaption{$\text{NMI}(\hat{\bs c}(t),\bs c^*)/\text{NMI}(\hat{\bs c}(\infty),\bs c^*)$.}
\end{minipage}
\cprotect\caption{Convergence of GMM~(\verb|ClusterR|).}
\label{fig:convergence_GMM_ClusterR}
\begin{minipage}{0.33\textwidth}
\centering
\includegraphics[width=\textwidth]{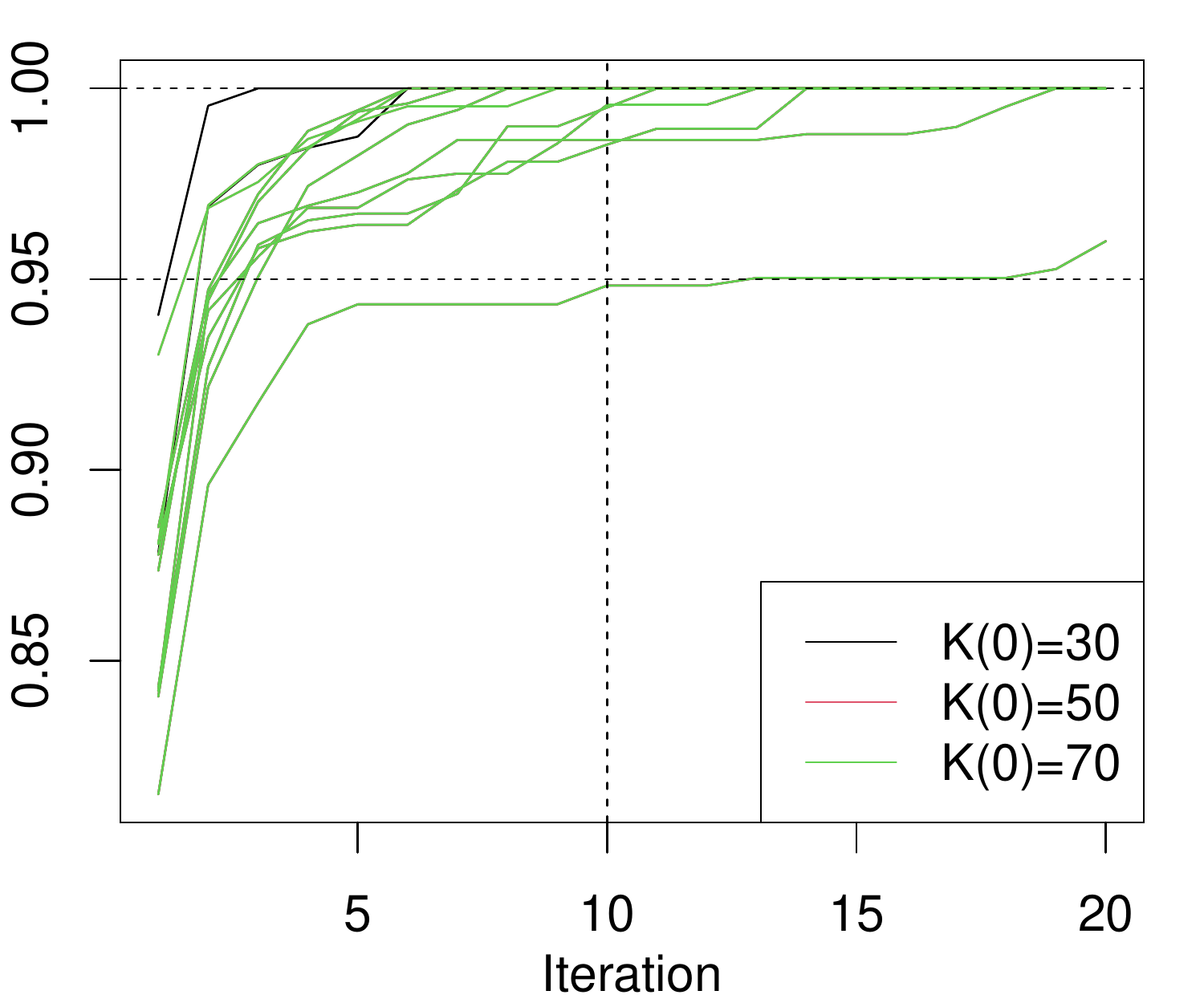}
\subcaption{$\text{NMI}(\hat{\bs c}(t),\hat{\bs c}(\infty))$.}
\end{minipage}
\begin{minipage}{0.33\textwidth}
\centering
\includegraphics[width=\textwidth]{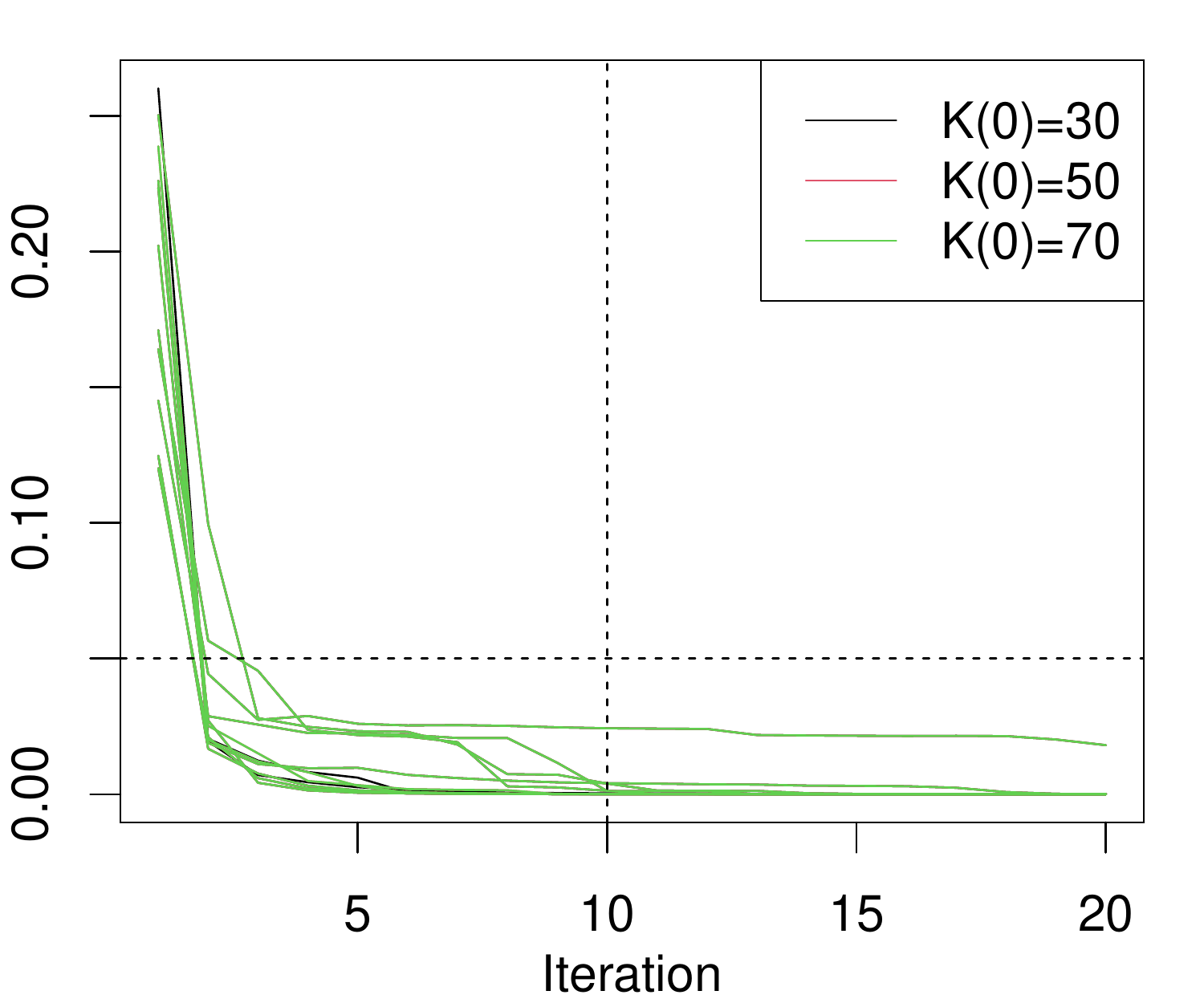}
\subcaption{$n^{-1}\sum_{i=1}^{n}\|\chi_i(t)-\chi_i(\infty))\|_2^2$.}
\end{minipage}
\begin{minipage}{0.33\textwidth}
\centering
\includegraphics[width=\textwidth]{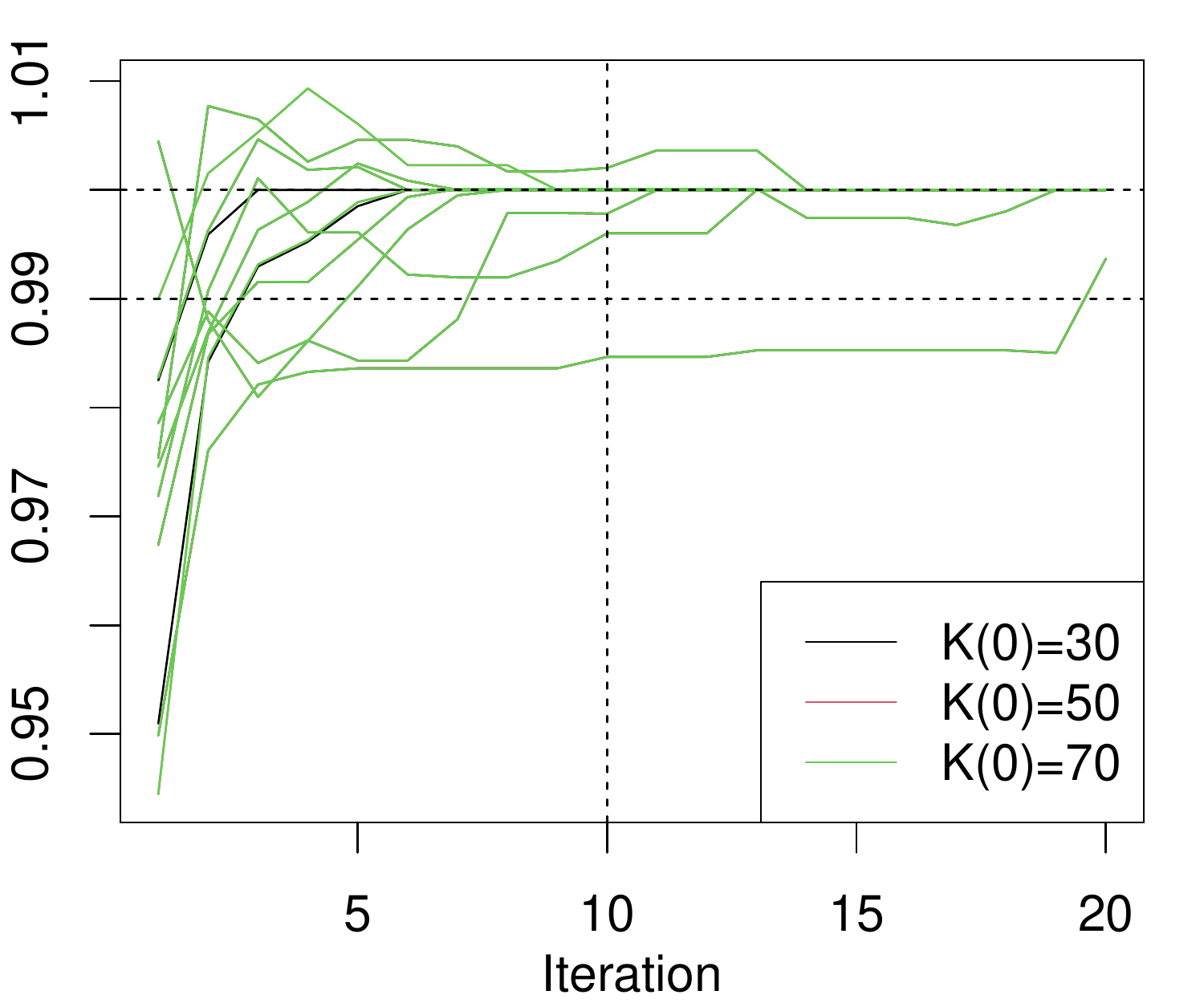}
\subcaption{$\text{NMI}(\hat{\bs c}(t),\bs c^*)/\text{NMI}(\hat{\bs c}(\infty),\bs c^*)$.}
\end{minipage}
\cprotect\caption{Convergence of GMM~(\verb|Mclust|+BIC).}
\label{fig:convergence_GMM_EII}
\end{figure}


\end{document}